\pgfplotsset{width=9cm,compat=1.5.1}
\newtheorem{definition}{Definition}
\newtheorem{lemma}{Lemma}
\newtheorem{theorem}{Theorem}
\newtheorem{corollary}{Corollary}
\newtheorem{proposition}{Proposition}
\newtheorem{example}{Example} 
\newcommand{\CP}{\mathcal{CP}}
\newcommand{\CPCP}{\mathcal{CPCP}}
\newcommand{\CPDNN}{\mathcal{CPDNN}}
\newcommand{\PSD}{\mathcal{PSD}}
\newcommand{\DNN}{\mathcal{DNN}}
\newcommand{\DD}{\mathcal{DD}}
\newcommand{\tr}{\operatorname{Tr}}
\newcommand{\bra}[1]{\langle #1 |}
\newcommand{\ket}[1]{| #1 \rangle}
\newcommand{\ketbra}[2]{| #1 \rangle\langle #2 |}
\newcommand\Tr{\mathop{\rm Tr}\nolimits}
\newcommand{\conv}{\mathrm{conv}}
\newcommand{\defeq}{\stackrel{\smash{\textnormal{\tiny def}}}{=}}
\def\I{\mathbb{1}} 
\def\C{\mathbb{C}}
\def\N{\mathbb{N}}
\def\R{\mathbb{R}}
\def\D{\mathcal{D}}
\def\Lin{\mathcal{L}}
\def\v{\mathbf{v}}
\def\w{\mathbf{w}}
\def\0{\mathbf{0}}
\begin{document}

\title{Completely positive completely positive maps\\(and a resource theory for non-negativity of quantum amplitudes)}  

\author{
	Nathaniel Johnston\footnote{Department of Mathematics \& Computer Science, Mount Allison University, Sackville, NB, Canada E4L 1E4}\textsuperscript{$\ \ $}\footnote{Department of Mathematics \& Statistics, University of Guelph, Guelph, ON, Canada N1G 2W1} \quad and\quad
	Jamie Sikora\footnote{Virginia Polytechnic Institute and State University, Blacksburg, VA, USA 24061}
}

\date{August 16, 2022}

\maketitle

\begin{abstract}
	In this work we examine quantum states which have non-negative amplitudes (in a fixed basis) and the channels which preserve them. 
	These states include the ground states of stoquastic Hamiltonians and they are of interest since they avoid the Sign Problem and can thus be efficiently simulated. 
	In optimization theory, the convex cone generated by such states is called the set of completely positive ($\CP$) matrices (not be confused with completely positive superoperators).
	We introduce quantum channels which preserve these states and call them \emph{completely positive completely positive}. To study these states and channels, we use the framework of resource theories and investigate how to measure and quantify this resource.\\
	
	\noindent \textbf{Keywords:} completely positive matrices, completely positive maps, quantum resource theories\\
	
	\noindent \textbf{MSC2010 Classification:} 81P40, 15A60, 15B57
\end{abstract}
  

\section{Introduction} 

There are several notions of what it means for a Hermitian matrix $X$ to be ``non-negative''. 
For instance, we could require that $X$ has non-negative eigenvalues; such matrices are called \emph{positive semidefinite (PSD)} and we denote the set of them by $\PSD$.  
If we further wish the entries to be non-negative (in some fixed basis), then such matrices are called \emph{doubly non-negative (DNN)}, the set of which is denoted $\DNN$. 
There are, however, many other useful definitions of non-negativity. 
For example, the set of \emph{completely positive (CP)} matrices, denoted $\CP$, is defined as 
\begin{equation} 
    \CP \defeq \conv \{ \mathbf{x}\mathbf{x}^{\top} : \mathbf{x} \geq 0 \}  
\end{equation}  
where, again, entrywise non-negativity of the vector $\mathbf{x}$ is with respect to a fixed basis, as is the transpose $\mathbf{x}^\top$. It is straightforward to see that we have the inclusions 
\begin{equation} 
    \CP \subseteq \DNN \subseteq \PSD. 
\end{equation}
Moreover, it can be shown that $\CP$ and $\DNN$ are the same if and only if the matrices are of size $4 \times 4$ or smaller~\cite{GW80}.  

Optimizing over the set of doubly non-negative matrices can be done efficiently using semidefinite programming, but, on the other hand, optimizing over $\CP$ is NP-hard~\cite{DG14}. Many natural NP-hard problems can be modelled via an optimization over the set of CP matrices as a cone program, and its relaxation to the set of DNN matrices gives a semidefinite programming relaxation to such problems which, as mentioned above, can be solved efficiently. 

Although CP matrices are well-studied in the optimization community \cite{Ber88,BS03}, they have seen only a few applications in the quantum information theory literature. Some examples of such applications include the fact that a mixed Dicke state is separable if and only if a certain associated matrix is CP \cite{Yu16,TAQLS17}, the related fact that copositive matrices (which are dual to CP matrices) can be used to construct symmetric entanglement witnesses \cite{MATS21}, and the use of completely positive matrices in determining separability or entanglement of quantum states with diagonal unitary symmetries \cite{JM19,SN21}. CP matrices also arise in the study of classical correlations with respect to a non-local game~\cite{PSVW18}.

Indeed, when one utters the words ``completely positive'', a quantum information theorist is almost surely going to think of superoperators. 
A superoperator $\Phi$ is said to be \emph{positive} if it maps PSD matrices to PSD matrices (not necessarily of the same dimension)
and is said to be \emph{completely positive}\footnote{We use $\CP$ to denote the set of completely positive matrices and avoid using any notation for completely positive superoperators to avoid confusion.} if $(\I_k \otimes \Phi)$ is positive for all $k \in \N$, where $\I_k$ is the identity map on $k \times k$ matrices. 
This notion of complete positivity is one of two ingredients in the definition of a quantum channel (the other being  trace-preservation).  
 
In this paper, we combine the two notions of complete positivity and look at superoperators which map CP matrices to CP matrices (not necessarily of the same dimension) as in the following (informal) definition. (A formal definition can be found in Section~\ref{sec:prelims}). 

\begin{definition} 
A superoperator $\Phi$ is said to be a \emph{completely positive completely positive (CPCP)} map if, for all $k \in \N$, we have 
\begin{equation} 
(\I_k \otimes \Phi)(X) \text{ is CP whenever } X \text{ is CP}.  
\end{equation} 
We denote the set of completely positive completely positive maps by $\CPCP$.
\end{definition}  

We show that CPCP maps are also completely positive, and are thus valid quantum channels as long as they are also trace-preserving. 
This paper studies CPCP quantum channels and how they act on particular quantum states, most notably the states that are represented by a CP density matrix, which we now discuss. 

A \emph{density matrix} is a positive semidefinite matrix with unit trace. Density matrices are the most general description of a quantum state, and we identity quantum states with their density matrices. 
If a quantum state $\rho$ has rank $1$ then it can be written as $\mathbf{x}\mathbf{x}^*$ for some column vector $\mathbf{x}$ satisfying $\| \mathbf{x} \|_2 = 1$. 
In this case, we identify the quantum state with $\mathbf{x}$ itself and it is called a \emph{pure} state. 
In this paper we use Dirac notation for pure states, i.e., the notation $\ket{v}$ refers to a column vector/pure quantum state with $\| \ket{v} \|_2 = 1$, and we only use this notation for vectors with unit norm.  
The notation $\bra{v}$ is defined as $\bra{v} \defeq \ket{v}^*$, the conjugate transpose of $\ket{v}$. 
Notice that $\ketbra{v}{v}$ is a rank-$1$ density matrix. 
 
When we say that a quantum state $\rho$ is CP, we mean 
that $\rho \in \CP$ and $\tr(\rho)=1$. 
Indeed, every CP quantum state can be decomposed as 
\begin{equation} 
    \sum_{i=1}^n \lambda_i \, \ketbra{v_i}{v_i},
\end{equation} 
where $\ket{v_1}, \ldots, \ket{v_n} \geq 0$ (with this inequality being meant entrywise), and $\lambda_1, \ldots, \lambda_n \geq 0$ satisfy $\sum_{i=1}^n \lambda_i = 1$. 
We see that $\ket{v} \geq 0$ means that $\ket{v}$ can be written as a non-negative linear combination of the fixed basis vectors, or in quantum terms, has non-negative amplitudes. 
Such quantum states are obviously preserved under CPCP quantum channels. 

One reason that CP quantum states are interesting is because they are ground states of \emph{stoquastic Hamiltonians}. 
A Hamiltonian is represented by a Hermitian matrix $H$ and a quantum state $\rho$ is in its ground space if it satisfies  
\begin{equation} 
\tr(H\rho) = \lambda_{\min}(H), 
\end{equation} 
where $\lambda_{\min}$ denotes the minimum eigenvalue. 
The Hamiltonian $H$ is said to be \emph{stoquastic} if all of its off-diagonal matrix elements (in a fixed basis)  
are real and non-positive. It follows from the Perron-Frobenius theorem that there exists a ground state of such a Hamiltonian that is CP.   
These quantum states are well-suited to Quantum Monte Carlo methods since they avoid the so-called \emph{Sign Problem} and can thus be efficiently simulated~\cite{Ohz17}. 
Dealing with the Sign Problem is one of the biggest challenges in the study of many-body quantum systems. 
Therefore, deciding whether a quantum state is CP (in some basis) is very closely related to the task of deciding whether it can be simulated efficiently using Quantum Monte Carlo methods.  

To give a framework for studying CP quantum states and CPCP quantum channels, we examine them from the perspective of a resource theory.  
Resource theories \cite{CG18} have been very effective for studying other interesting notions in quantum mechanics such as entanglement~\cite{Vid00}, coherence~\cite{BCP14}, and quantum computation with stabilizer states and operations \cite{VMGE14}, to name a few. 
Briefly, a resource theory has a set of \emph{free resources} (in our context, CP quantum states) and a set of \emph{free  operations} which preserve the free states (in our context, CPCP quantum channels).  
A quantum state which is not free is said to be \emph{resourceful}. 
In this paper, we introduce and explore the resource theory for non-negativity of quantum amplitudes that arises from considering these sets of free states and free operations. In particular, we investigate questions such as: 
\begin{itemize} 
    \item How resourceful is a particular state?  
    \item What is the most resourceful state? 
    \item Can we experimentally witness non-freeness of a state? 
    \item Are free operations physically or operationally motivated? 
\end{itemize}

The resource theory we develop in this paper has close connections to other possible and existing resource theories.  
Another definition of a pure CP quantum state $\ket{v}$ is that each of its entries has the same relative phase, i.e., there exists $\theta \in [0, 2\pi)$ such that $e^{i \theta} \ket{v} \geq 0$. 
This is because this phase does not appear in the outer product $\ketbra{v}{v}$. 
So, the resource theory of non-negative amplitudes could equivalently be considered as a resource theory of ``phaseyness'', which is closely related to the already-investigated resource theory of ``imaginarity'' \cite{HG18,WKR21}. Also, since CP quantum states are efficiently simulatible, this resource theory is closely related the resource theory of efficient simulatible quantum states.  
Towards the end of the paper, we compare the resource theory of non-negative amplitudes to the resource theory of coherence. 

\paragraph{Paper organization.} 
We introduce our notation and mathematical preliminaries needed for our results in Section~\ref{sec:prelims}. 
We then start building up our resource theory by exploring its set of free states in Section~\ref{sec:free_states}, which as mentioned earlier are the 
CP quantum states. 
In Section~\ref{sec:copositive} we explore the cone dual to the set of free states, which act for witnesses of this resource. 
In Section~\ref{sec:free_operations} we introduce various families of quantum channels that preserve these free states and thus act as free operations in this resource theory. 
We then explore various ways of measuring how resourceful a state is in Section~\ref{sec:measures_nonneg} and discuss the most resourceful state. 
We then close in Section~\ref{sec:coherence} with some remarks about how this resource theory relates to the resource theory of coherence, and in Section~\ref{sec:conclusions} with some open questions. 


\section{Mathematical preliminaries and notation}\label{sec:prelims}

Our notation and terminology is mostly standard in quantum information theory, so we direct the reader to the books \cite{Wat18,NC00} for a more thorough introduction. We use ``kets'' like $\ket{v},\ket{w} \in \C^n$ to denote complex column vectors with (Euclidean) norm~$1$, which represent pure quantum states, and we use boldface lowercase letters like $\v,\w \in \C^n$ to denote column vectors whose norms perhaps differ from~$1$. The standard (computational) basis of $\C^n$ is $\{\ket{j}\}_{j=0}^{n-1}$. We use $M_{m,n}$ to denote the set of $m \times n$ complex matrices, $M_n = M_{n,n}$, and $\D_n \subset M_n$ to denote the set of $n \times n$ density matrices (i.e., positive semidefinite matrices with trace~$1$, which represent mixed quantum states).

Standard inequality signs are meant entrywise, so that $\ket{v} \geq 0$ means that every entry of $\ket{v} \in \C^n$ is real and non-negative, and $A \geq 0$ means the same for the entries of $A \in M_n$. On the other hand, we use alternative inequality signs, as in $A \succeq 0$ and $A \preceq B$, to refer to the Loewner partial order, i.e., they mean that $A$ and $B-A$ are Hermitian positive semidefinite, respectively.

The set of linear maps from $M_n$ to $M_m$ is denoted by $\Lin(M_n,M_m)$. Of particular interest are the linear maps $\Phi \in \Lin(M_n,M_m)$ that are both \emph{completely positive}, i.e., 
\begin{equation} 
(\I_k \otimes \Phi)(X) \succeq 0 \; \text{ whenever } \; 0 \preceq X \in M_k \otimes M_n \; \text{ and } \; k \in \mathbb{N}, 
\end{equation} 
where $\I_k \in \Lin(M_k,M_k)$ is the identity map, and \emph{trace-preserving}, i.e., 
\begin{equation} 
\tr(\Phi(X)) = \tr(X) \; \text{ for all } \; X \in M_n.  
\end{equation} 
Linear maps with these two properties represent \emph{quantum channels}, and some particularly important examples include the \emph{partial traces} 
\begin{align} 
\mathrm{Tr}_1 : M_m \otimes M_n \rightarrow M_n 
\; \text{ defined by } \; 
\mathrm{Tr}_1(A \otimes B) = \tr(A) B 
\; \text{ for all } \; 
A \in M_m, B \in M_n; \\ 
\mathrm{Tr}_2 : M_m \otimes M_n \rightarrow M_m 
\; \text{ defined by } \; 
\mathrm{Tr}_2(A \otimes B) = \tr(B) A 
\; \text{ for all } \; 
A \in M_m, B \in M_n. 
\end{align}

A linear map $\Phi \in \Lin(M_n,M_m)$ is completely positive if and only if its \emph{Choi matrix}
\begin{equation}
J(\Phi) \defeq \sum_{i,j=0}^{n-1} \ketbra{i}{j} \otimes \Phi(\ketbra{i}{j}) 
\end{equation} 
is positive semidefinite \cite{Cho75}, and trace-preservation of $\Phi$ is equivalent to $\tr_2(J(\Phi)) = I_n$, where $I_n \in M_n$ is the $n \times n$ identity matrix. Equivalently, $\Phi$ is completely positive if and only if it can be written in the form $\Phi(X) = \sum_{j=1}^k A_j X A_j^*$ for some family of matrices $\{ A_1, \ldots, A_k \} \subset M_{m,n}$ called \emph{Kraus operators}, and trace-preservation of $\Phi$ is equivalent to $\sum_{j=1}^k A_j^*A_j = I_n$.  


\section{Free states: Normalized completely positive matrices}\label{sec:free_states}

Every quantum resource theory consists of two components: a subset $\mathcal{F}_n \subseteq \D_n$ consisting of \emph{free states}, which are the states that are ``useless'' at the particular task(s) considered by the resource theory, and a set of quantum channels that send $\mathcal{F}_n$ to $\mathcal{F}_m$ (called \emph{free operations}) \cite{CG18,Reg18}.

For the present resource theory, the free pure states are simply those that have non-negative amplitudes with respect to some fixed basis of $\C^n$ (which we assume is simply the computational basis for convenience). 
Equivalently, since pure states are only defined up to global phase, these are the pure states in which all entries have the same phase as each other. 
In other words, $\ket{v}$ is free if and only if there exists $\theta \in [0, 2 \pi)$ such that $e^{i\theta} \ket{v}$ has real and non-negative amplitudes. 
For mixed states, the free states are simply the convex combinations of (projections onto) these pure states: 
\begin{align}\label{eq:CP_defn}
    \CP_n \defeq \left\{ \sum_{j=1}^m p_j  
    \ketbra{v_j}{v_j} \in \D_n : 
    \ket{v_1}, \ldots, \ket{v_m} \geq 0, \, 
    p_1, \ldots, p_m \geq 0, \, 
    \sum_{j=1}^m p_j = 1, \ m \in \N \right\}.
\end{align}
Equivalently, the set $\CP_n$ consists of the reduced states that are obtainable after tracing out one half of a pure state with non-negative amplitudes:
\begin{align}\label{eq:CP_auxilliary}
    \CP_n = \big\{ \tr_1(\ketbra{v}{v}) \in \D_n : 0 \leq \ket{v} \in \C^m \otimes \C^n, m \in \N \big\}.
\end{align}
Indeed, this equivalence follows immediately from that fact that we can write every $0 \leq \ket{v} \in \C^m \otimes \C^n$ as $\ket{v} = \sum_{j=0}^{m-1} \ket{j} \otimes \sqrt{p_j}\ket{v_j}$ for some non-negative $\{p_j\}$ and entrywise non-negative $\{\ket{v_j}\}$. A pure state $\ket{v} \in \C^m \otimes \C^n$ for which $\Tr_1(\ketbra{v}{v}) = \rho$ is called a \emph{purification} of $\rho$, so Equation~\eqref{eq:CP_auxilliary} says that $\CP_n$ consists exactly of the density matrices that have an entrywise non-negative purification.

Determining whether or not a particular mixed state $\rho$ is free (i.e., determining if $\rho \in \CP_n$) is NP-hard~\cite{DG14}, so we sometimes work with the set of \emph{doubly non-negative} density matrices instead:
\begin{align}\label{eq:DNN_defn}
    \DNN_n \defeq \big\{ \rho \in \D_n : \rho \geq 0 \big\}.
\end{align}
Membership in $\DNN_n$ can be determined straightforwardly, which makes it much easier to work with in many settings. It is straightforward to see that $\CP_n \subseteq \DNN_n$, and it is true (but not straightforward to see) that equality holds if and only if $n \leq 4$ \cite{GW80}. When $n \geq 5$, there are density matrices with non-negative entries that nonetheless are not completely positive (and thus not free in this resource theory), with one simple example \cite{HN63} being
\begin{align}\label{eq:dnn_not_cp}
    \rho = \frac{1}{9}\begin{bmatrix}
        1 & 1 & 0 & 0 & 1 \\
        1 & 2 & 1 & 0 & 0 \\
        0 & 1 & 2 & 1 & 0 \\
        0 & 0 & 1 & 1 & 1 \\
        1 & 0 & 0 & 1 & 3
    \end{bmatrix} \in \DNN_5 \backslash \CP_5.
\end{align}
In particular, the fact that this mixed state is not completely positive means that, despite its entries all being real and non-negative, it does not have a purification with real and non-negative entries. We verify that the state~\eqref{eq:dnn_not_cp} is not completely positive in Section~\ref{sec:copositive} (though this example is well-known in the literature of completely positive matrices).


\subsection{The completely positive rank}\label{sec:cp_rank}

It is worth emphasizing that the number of terms required in the convex sum in Equation~\eqref{eq:CP_defn} (or equivalently, the dimension $m$ needed in Equation~\eqref{eq:CP_auxilliary}) is indeed finite, and we can choose $m \leq n^2+1$ by Carath\'{e}odory's Theorem. 
The minimum number of terms $m$ needed to represent a particular state $\rho \in \CP_n$ in this way is called its \emph{completely positive rank} (CP-rank), and the argument we just provided shows that the CP-rank of every CP state is no greater than $n^2+1$. 
In fact, this bound can be reduced by roughly a factor of $2$: it was shown in \cite{BB03} that the CP-rank of every CP state is at most $n(n+1)/2-1$, but the best possible upper bound is not known in general (for example, when $n = 5$, the maximum CP-rank is $6$ \cite{SBJS13}, which is smaller than the general upper bound of $5(5+1)/2-1 = 14$).

The central problem investigated in \cite{TCF19} was whether or not a given pure quantum state in a tensor product space could be transformed via local unitary operations into one that has non-negative entries. The Schmidt decomposition tells us that for every pure state $\ket{v} \in \C^m \otimes \C^n$, there exist unitary matrices $U_1 \in M_m$ and $U_2 \in M_n$ such that $(U_1 \otimes U_2)\ket{v} = \sum_{j}\lambda_j \ket{j} \otimes \ket{j} \geq 0$, where $\{\lambda_j\}$ are the (non-negative) Schmidt coefficients of $\ket{v}$. The following proposition uses the CP-rank to answer the variant of this question where we only have unitary freedom on one half of the state, rather than both halves.
  
\begin{proposition}\label{prop:cp_rank_local_unitary}
    Suppose $\ket{v} \in \C^m \otimes \C^n$ is a pure state. There exists a unitary matrix $U \in M_m$ such that $(U \otimes I_n)\ket{v} \geq 0$ if and only if $\tr_1(\ketbra{v}{v})$ is completely positive with CP-rank $\leq m$.
\end{proposition}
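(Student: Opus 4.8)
The plan is to unpack both directions using the decomposition of an entrywise‑nonnegative vector on $\C^m \otimes \C^n$ together with the characterization of $\CP_n$ via purifications from Equation~\eqref{eq:CP_auxilliary}, keeping careful track of the dimension $m$ so that it matches the CP‑rank.

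For the ``only if'' direction, suppose $U \in M_m$ is unitary with $\ket{w} \defeq (U \otimes I_n)\ket{v} \geq 0$. Since $U \otimes I_n$ is unitary it does not change the reduced state on the second system: $\tr_1(\ketbra{w}{w}) = \tr_1\big((U\otimes I_n)\ketbra{v}{v}(U^*\otimes I_n)\big) = \tr_1(\ketbra{v}{v})$, using $\tr_1(U A U^* \otimes B) = \tr(UAU^*)B = \tr(A)B$. Now write $\ket{w} = \sum_{j=0}^{m-1} \ket{j}\otimes \microspace\sqrt{p_j}\microspace\ket{v_j}$ with $p_j \geq 0$ and each $\ket{v_j} \geq 0$ (this is possible because $\ket{w}\geq 0$). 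Then $\tr_1(\ketbra{w}{w}) = \sum_{j=0}^{m-1} p_j \ketbra{v_j}{v_j}$, which exhibits $\tr_1(\ketbra{v}{v})$ as a convex combination of at most $m$ rank‑one terms $\ketbra{v_j}{v_j}$ with $\ket{v_j}\geq 0$; hence it is completely positive with CP‑rank at most $m$.

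For the ``if'' direction, suppose $\rho \defeq \tr_1(\ketbra{v}{v})$ is completely positive with CP‑rank $r \leq m$, say $\rho = \sum_{j=0}^{r-1} p_j \ketbra{v_j}{v_j}$ with $p_j \geq 0$, $\sum_j p_j = 1$, and $\ket{v_j}\geq 0$. Define the entrywise‑nonnegative vector $\ket{w} \defeq \sum_{j=0}^{r-1} \ket{j}\otimes\microspace\sqrt{p_j}\microspace\ket{v_j} \in \C^m\otimes\C^n$ (padding with zero components in the first register since $r \leq m$); then $\tr_1(\ketbra{w}{w}) = \rho = \tr_1(\ketbra{v}{v})$. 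The key fact now is that two pure states on $\C^m\otimes\C^n$ with the same reduced state on the second system are related by a unitary on the first system: this is the standard purification‑uniqueness statement (any two purifications of $\rho$ living in the \emph{same} space $\C^m$ differ by a unitary on the purifying system), so there exists unitary $U \in M_m$ with $(U\otimes I_n)\ket{v} = \ket{w} \geq 0$, as required.

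The only delicate point — and the step I would be most careful about — is the dimension bookkeeping in the ``if'' direction: purification uniqueness in the clean ``unitary on the first factor'' form requires both purifications to sit in the same $\C^m \otimes \C^n$, which is exactly why we need the CP‑rank bound $r \leq m$ to pad $\ket{w}$ up to the right size, and conversely why in the ``only if'' direction the CP‑rank we extract is automatically $\leq m$. One should also note the trace normalization: $\tr(\rho) = \||v\rangle\|_2^2 = 1$, so the weights $p_j$ genuinely sum to $1$ and $\rho \in \D_n$, making ``completely positive'' here synonymous with ``$\rho \in \CP_n$''. Everything else is the routine computation $\tr_1(\ketbra{w}{w}) = \sum_j p_j \ketbra{v_j}{v_j}$ for a vector of the block form above.
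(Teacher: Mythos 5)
Your proof is correct and follows essentially the same route as the paper's: decompose the entrywise non-negative vector $\ket{w}$ into blocks for the ``only if'' direction, and invoke the unitary equivalence of purifications living in the same space (the paper cites \cite[Theorem~2.12]{Wat18}) for the ``if'' direction. Your version simply spells out the dimension bookkeeping (padding the purification up to $m$ blocks when the CP-rank is strictly less than $m$) that the paper compresses into ``reversing the above argument.''
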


\begin{proof}
    For the ``only if'' direction, suppose that there is a unitary matrix $U$ such that ${\ket{w} := (U \otimes I_n)\ket{v} \geq 0}$. If we write $\ket{w} = \sum_{j=0}^{m-1}\ket{j} \otimes \w_{\mathbf{j}}$, then $\w_{\mathbf{j}} \geq 0$ for each $j$, then
    \begin{equation}
        \tr_1(\ketbra{v}{v}) = \tr_1(\ketbra{w}{w}) = \sum_{j=0}^{m-1} \w_{\mathbf{j}}\w_{\mathbf{j}}^*,
    \end{equation} 
    which is completely positive with CP-rank $\leq m$.
    
    The ``if'' direction follows by reversing the above argument and noting that given two pure states ${\ket{v},\ket{w} \in \C^m \otimes \C^n}$, they satisfy $\tr_1(\ketbra{v}{v}) = \tr_1(\ketbra{w}{w})$ if and only if there exists a unitary matrix $U \in M_m$ such that $\ket{w} = (U \otimes I_n)\ket{v}$ \cite[Theorem~2.12]{Wat18}.
\end{proof}

However, determining whether or not such a local unitary transformation exists in the multipartite case (i.e., where we have three or more tensor factors, rather than just two) seems much more difficult.

        
\section{Witnesses: Copositive matrices}\label{sec:copositive}

A real symmetric matrix $W \in M_n$ is called \emph{copositive} if $\bra{v}W\ket{v} \geq 0$ whenever $\ket{v} \geq 0$, or equivalently, if $\tr(W \rho) \geq 0$ whenever $\rho \in \CP_n$. 
In other words, copositive matrices are the members of the dual cone of $\CP_n$, if we think of $\CP_n$ as a subset of the vector space of real symmetric $n \times n$ matrices. However, in our setting of quantum information theory, it is much more natural to regard $\CP_n$ as a subset of the \emph{complex Hermitian} $n \times n$ matrices. When we do this, the dual cone $\CP_n^\circ$ actually consists not only of copositive matrices, but also any Hermitian matrix $W$ whose entrywise real part\footnote{We mean ``real part'' entrywise, not in the sense of the Hermitian $+$ skew-Hermitian Cartesian decomposition.} $\Re(W)$ is copositive:
\begin{equation} 
    \CP_n^\circ = \big\{ W = W^* \in M_n : \Re(W) \ \text{is copositive} \big\}.
\end{equation} 
In other words, $\CP_n^\circ$ consists of the copositive matrices plus arbitrary imaginary part (subject to Hermiticity). Indeed, if $W$ is such a matrix then $\tr(W\rho) = \tr(\Re(W)\rho) \geq 0$ whenever $\rho \in \CP_n$ thanks to Hermiticity of $W$ and realness of $\rho$. Note that unlike the members of $\CP_n$, we do not place any normalization condition on the members of $\CP_n^\circ$, as it is a cone.
  
A standard separating hyperplane argument shows that for every density matrix $\rho \notin \CP_n$, there exists $W \in \CP_n^\circ$ for which $\tr(W\rho) < 0$. In fact, $W$ can always be chosen to be real, if desired, and hence a copositive matrix.  
For this reason, we think of copositive matrices as witnesses for the resourcefulness of the non-free state $\rho$ (just like every entangled state can be verified to be entangled via some entanglement witness). 
Furthermore, this method provides a way of demonstrating the resourcefulness of a state that is directly measurable in a lab---$W$ can be thought of as an observable that we measure in a system with state $\rho$, and Born's rule tells us that $\tr(W\rho) < 0$ is the average result of that measurement.

It is clear that if a symmetric matrix $W \in M_n$ can be written as the sum of a positive semidefinite matrix and an entrywise non-negative matrix then it is copositive, since $\tr(W\rho) \geq 0$ for all $\rho \in \CP_n$. 
In fact, the matrices of this form are exactly the real parts of the members of the dual cone of the doubly non-negative density matrices:\footnote{Again, we are considering $\DNN_n$ as a subset of the set of complex Hermitian matrices. If we considered it as a subset of real symmetric matrices, its dual cone $\DNN_n^\circ$ would be the same, except its members would all be real.}   
\begin{equation}
    \DNN_n^\circ = \big\{ W = W^* \in M_n : \Re(W) = X + Y, X \succeq 0, Y \geq 0 \big\}.
\end{equation} 
Since $\CP_n \subseteq \DNN_n$ with equality if and only if $n \leq 4$, it follows immediately from standard results about dual cones (see \cite{BV04}, for example) that $\DNN_n^\circ \subseteq \CP_n^\circ$ with equality if and only if $n \leq 4$ as well.

The most well-known matrix that is copositive but not a member of $\DNN_n^\circ$ is the \emph{Horn matrix} \cite{HN63} $W_1$, where for $x \geq 1$ we define
\begin{align}\label{eq:horn_like_Wx}
    W_x = \begin{bmatrix}
        1 & -1 & x & x & -1 \\
       -1 & 1 & -1 & x & x \\
        x & -1 & 1 & -1 & x \\
        x & x & -1 & 1 & -1 \\
       -1 & x & x & -1 & 1
   \end{bmatrix}.
\end{align}
In fact, the Horn matrix $W_1$ verifies that the density matrix $\rho$ from Equation~\eqref{eq:dnn_not_cp} is indeed not completely positive as we claimed earlier, since $\tr(W_1\rho) = -1/9 < 0$.

To confirm that $W_x \notin \DNN_n^\circ$ for all $x \geq 1$, we can simply use semidefinite programming. To verify that $W_1$ is indeed copositive, we compute
\begin{align}
    \v{^\top}W_1\v & = \frac{\displaystyle\sum_{j=1}^5 v_j\big(v_j - v_{j+1} + v_{j+2} + v_{j+3} - v_{j+4}\big)^2 + 4\sum_{j=1}^5v_jv_{j+1}v_{j+3}}{v_1 + v_2 + v_3 + v_4 + v_5} \geq 0
\end{align} 
whenever $\0 \neq \v \geq 0$, where the subscripts above are taken modulo $5$. Copositivity of $W_x$ when $x > 1$ then also follows by just noting that $W_x$ is the sum of $W_1$ and an entrywise non-negative matrix. More generally, to show that a matrix is (or is not) copositive or completely positive, we can make use of semidefinite programming hierarchies like the one introduced in \cite{Par00}.


\section{Free operations: Completely positive-preserving and completely positive completely positive maps}\label{sec:free_operations} 
  
The free operations in this resource theory are the quantum channels $\Phi \in \Lin(M_n,M_m)$ 
that preserve complete positivity. We refer to such channels as \emph{CP-preserving}, and they satisfy
\begin{equation} 
\Phi(X) \in \CP_m \; \text{ whenever } \; X \in \CP_n. 
\end{equation}

It is natural to ask which CP-preserving channels remain CP-preserving upon tensoring them with an identity channel of arbitrary size (i.e., when they act on just part of a quantum state rather than the whole state). For this reason, we call a linear map $\Phi \in \Lin(M_n,M_m)$ a \emph{completely positive completely positive (CPCP) map} if it satisfies 
\begin{equation} 
(\I_k \otimes \Phi)(X) \in \CP_{km} \; \text{ whenever } \; X \in \CP_{kn}. 
\end{equation}
Equivalently, a linear map is CPCP if $\I_k \otimes \Phi$ is CP-preserving for all $k \geq 1$.  
If we instead regard CPCP channels as the free operations then, in the terminology of \cite{CG18}, it gives this resource theory a tensor product structure.

The first result of this section characterizes CPCP maps in a few other ways that are analogous to the various well-known characterizations of completely positive maps.

\begin{theorem}\label{thm:CPCP_choi}
    Suppose $\Phi \in \Lin(M_n,M_m)$. The following are equivalent:
    \begin{enumerate}
        \item[(a)] $\Phi$ is CPCP.
        
        \item[(b)] $J(\Phi)$ is completely positive (as a matrix). 
        
        \item[(c)] $\Phi$ is completely positive (as a linear map) and has a family of real entrywise non-negative Kraus operators.
    \end{enumerate}
    Furthermore, the CP-rank of $J(\Phi)$ is the minimal number of entrywise non-negative Kraus operators possible in~(c).
\end{theorem}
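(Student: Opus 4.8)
The plan is to prove the cycle of implications (c) $\Rightarrow$ (b) $\Rightarrow$ (a) $\Rightarrow$ (c), with the rank statement falling out of the (b) $\Leftrightarrow$ (c) equivalence. First I would establish (c) $\Rightarrow$ (b): if $\Phi(X) = \sum_j A_j X A_j^*$ with each $A_j \in M_{m,n}$ real and entrywise non-negative, then plugging into the definition of the Choi matrix gives $J(\Phi) = \sum_j \big(\sum_i \ket{i}\otimes A_j\ket{i}\big)\big(\sum_i \bra{i}\otimes \bra{i}A_j^*\big) = \sum_j \v_j \v_j^\top$ where $\v_j := \sum_i \ket{i}\otimes A_j\ket{i} = \mathrm{vec}(A_j^\top)$ (or a similar vectorization), and this vector is entrywise non-negative precisely because $A_j$ is. Hence $J(\Phi)$ is a sum of rank-one terms $\v_j\v_j^\top$ with $\v_j \geq 0$, so $J(\Phi)$ is CP; moreover the number of terms equals the number of Kraus operators, which will later give the rank claim.

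Next, (b) $\Rightarrow$ (a): I would use the correspondence between CP density matrices and entrywise non-negative purifications from Equation~\eqref{eq:CP_auxilliary}, together with the Choi-matrix formalism. The cleanest route is to first note that (b) $\Rightarrow$ (c) by reversing the computation above — a CP matrix $J(\Phi) = \sum_j \v_j\v_j^\top$ with $\v_j \geq 0$ lets us read off entrywise non-negative Kraus operators $A_j$ via the inverse vectorization, and complete positivity of $\Phi$ as a map is then automatic since $J(\Phi) \succeq 0$. So it suffices to show (c) $\Rightarrow$ (a): if $\Phi$ has real non-negative Kraus operators $A_j$ and $X \in \CP_{kn}$, write $X = \sum_\ell \ket{x_\ell}\!\bra{x_\ell}$ with $\ket{x_\ell} \geq 0$ (entrywise, in $\C^k \otimes \C^n$); then $(\I_k\otimes\Phi)(X) = \sum_{\ell,j}(I_k \otimes A_j)\ket{x_\ell}\bra{x_\ell}(I_k\otimes A_j^*)$, and each vector $(I_k\otimes A_j)\ket{x_\ell}$ is a product of an entrywise non-negative matrix with an entrywise non-negative vector, hence entrywise non-negative. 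This exhibits $(\I_k\otimes\Phi)(X)$ as a non-negative combination of $\ket{y}\!\bra{y}$ with $\ket{y}\geq 0$, so it lies in $\CP_{km}$, giving (a). The converse direction (a) $\Rightarrow$ the existence of such Kraus operators can alternatively be obtained directly: apply $\I_n \otimes \Phi$ to the (unnormalized) CP matrix $\sum_{i,i'}\ket{i}\!\bra{i'}\otimes\ket{i}\!\bra{i'} = \big(\sum_i\ket{i}\otimes\ket{i}\big)\big(\sum_{i'}\bra{i'}\otimes\bra{i'}\big)$, which is of the form $\v\v^\top$ with $\v \geq 0$ and hence in $\CP$; the output is exactly $J(\Phi)$ (up to the convention in the definition), so $J(\Phi)$ is CP, which is (b), and we have already seen (b) $\Rightarrow$ (c).

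Finally, for the rank statement: the computations above set up a bijection (up to relabeling) between decompositions $J(\Phi) = \sum_{j=1}^k \v_j\v_j^\top$ with $\v_j \geq 0$ and families of $k$ entrywise non-negative Kraus operators for $\Phi$, via $\v_j \leftrightarrow A_j$. Since the CP-rank of $J(\Phi)$ is by definition the minimal $k$ over all such decompositions, it equals the minimal number of entrywise non-negative Kraus operators.

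I expect the main obstacle to be bookkeeping rather than conceptual: getting the vectorization/unvectorization conventions exactly right so that ``$A_j$ entrywise non-negative'' corresponds precisely to ``$\v_j$ entrywise non-negative'' (there is a transpose in the standard $J(\Phi)$-to-Kraus dictionary that must be handled carefully, since transposing a non-negative matrix keeps it non-negative, this is harmless but must be stated), and making sure the ``only if'' direction of (a) $\Rightarrow$ (c) genuinely produces Kraus operators of $\Phi$ itself and not of some related map. One should also double check that the standard fact $\Phi$ CP $\Leftrightarrow$ $J(\Phi)\succeq 0$ is being invoked only in the direction needed, and that CP (as a matrix) implies PSD so that (b) does imply $\Phi$ is a CP map.
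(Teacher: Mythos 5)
Your proposal is correct and follows essentially the same route as the paper: (a)\,$\Rightarrow$\,(b) by applying $\I_n \otimes \Phi$ to the entrywise non-negative rank-one (hence CP) matrix $\sum_{i,j}\ketbra{i}{j}\otimes\ketbra{i}{j}$, (b)\,$\Leftrightarrow$\,(c) via the vectorization/matricization correspondence between non-negative rank-one decompositions of $J(\Phi)$ and non-negative Kraus families (which also yields the CP-rank claim), and (c)\,$\Rightarrow$\,(a) by observing that $(I_k \otimes A_j)\ket{x_\ell}$ is entrywise non-negative. The bookkeeping concerns you flag (the transpose in the vectorization dictionary, and the unnormalized trace of the test state in (a)\,$\Rightarrow$\,(b)) are harmless for exactly the reasons you give, and are handled the same way in the paper.
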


\begin{proof}
    The fact that (a) implies (b) follows simply from the fact that $\sum_{i,j=0}^{n-1}\ketbra{i}{j} \otimes \ketbra{i}{j}$ is completely positive, so $J(\Phi) = (\I_n \otimes \Phi)\big(\sum_{i,j=0}^{n-1}\ketbra{i}{j} \otimes \ketbra{i}{j}\big)$ is also completely positive. 
    
    The fact that (b) implies (c) can be seen by noting that if $J(\Phi)$ is CP then we can write $J(\Phi) = \sum_{j=1}^m \mathbf{v_j}\mathbf{v}_{\mathbf{j}}^*$ for some entrywise non-negative column vectors $\{\mathbf{v_1}, \ldots, \mathbf{v_m}\}$. Then $\Phi(X) = \sum_{j=1}^m A_jXA_j^*$, where $A_j = \mathrm{mat}(\mathbf{v_j})$ is the column-by-column matricization of $\mathbf{v_j}$. Since each $\mathbf{v_j}$ is entrywise non-negative, so is each $A_j$. The ``furthermore'' part of the theorem can be seen by reversing the previous argument: if $m$ is the least integer for which we can write $\Phi(X) = \sum_{j=1}^m A_jXA_j^*$ with each $A_j$ entrywise non-negative, then letting $\mathbf{v_j} = \mathrm{vec}(A_j)$ be their vectorizations shows that $J(\Phi) = \sum_{j=1}^m \mathbf{v_j}\mathbf{v}_{\mathbf{j}}^*$ has CP-rank equal to $m$.
    
    Finally, to see that (c) implies (a) just notice that if $X = \sum_{i=1}^\ell \mathbf{v_i}\mathbf{v_i}^*$ is completely positive (with $\mathbf{v_i} \geq 0$ for all $i$) and $\{ A_1, \ldots, A_m \}$ are the entrywise non-negative Kraus operators of $\Phi$, then $(I_k \otimes A_j)\mathbf{v_i}$ is also entrywise non-negative, so
    \begin{equation}
        (\I_k \otimes \Phi)(X) = \sum_{j=1}^m (I_k \otimes A_j) \left(\sum_{i=1}^\ell \mathbf{v_i}\mathbf{v_i}^*\right) (I_k \otimes A_j)^* = \sum_{j=1}^m \sum_{i=1}^\ell \big((I_k \otimes A_j) \mathbf{v_i}\big)\big((I_k \otimes A_j) \mathbf{v_i}\big)^*
    \end{equation}
    is also completely positive for all $k \in \N$.
\end{proof} 

The above characterization of the Kraus operators of CPCP maps shows that CPCP maps are exactly the same as the maps that are positively factorizable via an abelian algebra, as concurrently investigated in \cite[Theorem~3.2]{LR21}. It also immediately gives us the following corollary.

\begin{corollary} \label{convex_composition} 
    The set $\CPCP$ is a convex cone and is closed under composition (when the composition is well-defined). 
\end{corollary}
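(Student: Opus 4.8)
The plan is to leverage Theorem~\ref{thm:CPCP_choi}, which tells us that $\CPCP$ maps are exactly the completely positive maps admitting a family of entrywise non-negative real Kraus operators. This Kraus characterization makes both claims nearly immediate, so I would structure the proof around it rather than working directly with the definition involving $\I_k \otimes \Phi$.

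First, for the cone property: if $\Phi$ has entrywise non-negative Kraus operators $\{A_1,\dots,A_m\}$ and $c \geq 0$, then $\{\sqrt{c}\,A_1,\dots,\sqrt{c}\,A_m\}$ are entrywise non-negative Kraus operators for $c\Phi$, so $c\Phi \in \CPCP$. For convexity (or additivity, which together with the scaling gives the cone being convex), given $\Phi_1,\Phi_2 \in \CPCP(M_n,M_m)$ with entrywise non-negative Kraus families $\{A_i\}$ and $\{B_j\}$ respectively, the map $\Phi_1 + \Phi_2$ has Kraus family $\{A_i\} \cup \{B_j\}$, all of whose members are entrywise non-negative; hence $\Phi_1+\Phi_2 \in \CPCP$. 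Alternatively, one can invoke part~(b): $J(c\Phi_1 + \Phi_2) = cJ(\Phi_1) + J(\Phi_2)$, and $\CP$ (the set of completely positive matrices) is itself a convex cone, so the Choi matrix of any non-negative combination stays completely positive.

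Second, for closure under composition: suppose $\Phi \in \CPCP(M_n,M_m)$ with entrywise non-negative Kraus operators $\{A_1,\dots,A_p\} \subset M_{m,n}$, and $\Psi \in \CPCP(M_m,M_\ell)$ with entrywise non-negative Kraus operators $\{B_1,\dots,B_q\} \subset M_{\ell,m}$. Then for any $X$, $(\Psi \circ \Phi)(X) = \sum_{i,j} B_j A_i X A_i^* B_j^* = \sum_{i,j} (B_j A_i) X (B_j A_i)^*$, so $\{B_j A_i : 1\leq i\leq p,\, 1\leq j\leq q\}$ is a Kraus family for $\Psi \circ \Phi$. Each $B_j A_i$ is a product of two entrywise non-negative matrices and is therefore entrywise non-negative (the product of non-negative matrices has non-negative entries, being a sum of products of non-negative scalars). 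By Theorem~\ref{thm:CPCP_choi}(c), this shows $\Psi \circ \Phi \in \CPCP$. The phrase ``when the composition is well-defined'' simply flags the dimension-matching requirement that the codomain of $\Phi$ equals the domain of $\Psi$.

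There is essentially no obstacle here: the work has all been done in Theorem~\ref{thm:CPCP_choi}, and the only facts used beyond it are that entrywise non-negative matrices are closed under products and non-negative linear combinations, both of which are elementary. The one point worth stating carefully is that one should confirm each composed or combined map is genuinely in $\Lin$ of the appropriate spaces and that the Kraus representations produced are legitimate (finite families of matrices of the right size), which is routine.
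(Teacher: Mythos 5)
Your proof is correct and follows essentially the same route as the paper: both arguments rest on the Kraus-operator characterization in Theorem~\ref{thm:CPCP_choi}(c), using unions and non-negative scalings of Kraus families for the cone property and products $B_jA_i$ of entrywise non-negative Kraus operators for closure under composition. Your write-up merely spells out the convex-cone step (which the paper leaves as ``immediate'') in a bit more detail, including the equivalent Choi-matrix argument via part~(b).
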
 

\begin{proof} 
The fact that $\CPCP$ is a convex cone follows immediately using 
Theorem~\ref{thm:CPCP_choi}$(c)$. To see that it is closed under composition, notice that if we have a CPCP map $\Phi_1$ with Kraus operators $\{ A_1, \ldots, A_m \}$ (where each have entrywise non-negative entries) and another CPCP map $\Phi_2$ with Kraus operators $\{ B_1, \ldots, B_\ell \}$ (where each has entrywise non-negative entries), then the composition $\Phi_2 \circ \Phi_1$ has Kraus operators 
\begin{equation} 
\{ B_j A_k : j \in \{ 1, \ldots, \ell \}, k \in \{ 1, \ldots, m \} \} 
\end{equation} 
and each of these clearly have non-negative entries as well, so $\Phi_2 \circ \Phi_1$ is thus CPCP.  
\end{proof}

We also have the following corollary concerning CPCP \emph{quantum channels} (i.e., members of $\CPCP$ that are trace-preserving), which follows via the same argument used to prove Corollary~\ref{convex_composition}.

\begin{corollary} \label{convex_composition_channel} 
    The set of CPCP channels is convex and closed under composition (when the composition is well-defined).
\end{corollary}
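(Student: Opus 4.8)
The plan is to reduce Corollary~\ref{convex_composition_channel} to Corollary~\ref{convex_composition} by simply observing that trace-preservation is preserved under both convex combinations and composition, so that the CPCP channels are exactly the trace-preserving members of the already-understood cone $\CPCP$. Concretely, I would first recall that a linear map $\Phi$ is trace-preserving iff $\tr(\Phi(X)) = \tr(X)$ for all $X$, equivalently (in Kraus form) $\sum_j A_j^* A_j = I_n$. With this in hand the proof is a two-line check for each of the two closure properties.

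For convexity, suppose $\Phi_1, \Phi_2 \in \Lin(M_n,M_m)$ are CPCP channels and $t \in [0,1]$. By Corollary~\ref{convex_composition}, $t\Phi_1 + (1-t)\Phi_2 \in \CPCP$, and for trace-preservation one computes, for any $X \in M_n$,
\begin{equation}
\tr\big((t\Phi_1 + (1-t)\Phi_2)(X)\big) = t\,\tr(\Phi_1(X)) + (1-t)\,\tr(\Phi_2(X)) = t\,\tr(X) + (1-t)\,\tr(X) = \tr(X),
\end{equation}
so the convex combination is again a CPCP channel; hence the set of CPCP channels is convex. For closure under composition, suppose $\Phi_1 \in \Lin(M_n,M_p)$ and $\Phi_2 \in \Lin(M_p,M_m)$ are CPCP channels, so that $\Phi_2 \circ \Phi_1$ is well-defined. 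Corollary~\ref{convex_composition} gives $\Phi_2 \circ \Phi_1 \in \CPCP$, and for any $X \in M_n$ we have $\tr\big((\Phi_2 \circ \Phi_1)(X)\big) = \tr\big(\Phi_2(\Phi_1(X))\big) = \tr(\Phi_1(X)) = \tr(X)$, using trace-preservation of $\Phi_2$ and then of $\Phi_1$; hence $\Phi_2 \circ \Phi_1$ is a CPCP channel.

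There is no real obstacle here: the only mild subtlety is dimensional bookkeeping (convexity only makes sense for maps with the same domain and codomain, and composability requires the codomain of $\Phi_1$ to match the domain of $\Phi_2$), which is exactly the content of the parenthetical ``when the composition is well-defined.'' Everything else is inherited verbatim from Corollary~\ref{convex_composition} together with the trivial fact that the trace functional is linear and that composing trace-preserving maps yields a trace-preserving map. Alternatively, and equivalently, one could phrase the whole argument in the Kraus picture used for Corollary~\ref{convex_composition}: scaling the Kraus operators of $\Phi_1, \Phi_2$ by $\sqrt{t}, \sqrt{1-t}$ for the convex combination, and forming the products $\{B_j A_k\}$ for the composition, and in each case checking the normalization $\sum (\cdot)^*(\cdot) = I$ survives — but the trace-functional argument above is shorter and makes transparent that nothing beyond Corollary~\ref{convex_composition} is needed.
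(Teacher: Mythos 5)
Your proof is correct and matches the paper's approach: the paper simply states that the corollary ``follows via the same argument used to prove Corollary~\ref{convex_composition},'' and your explicit reduction---invoking Corollary~\ref{convex_composition} for membership in $\CPCP$ and then checking that trace-preservation survives convex combinations and composition via linearity of the trace---is exactly the argument being alluded to, just spelled out.
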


While we are not aware of a simple characterization of CP-preserving maps that is analogous to the one for CPCP maps above, we can at least see that these two sets are different (in all dimensions) by noting that if $\{\ket{v_0},\ket{v_1},\ldots,\ket{v_{n-1}}\}$ is any orthonormal basis of $\C^n$ other than the standard basis, then at least one entry of at least one of $\ketbra{v_0}{v_0}$, $\ketbra{v_1}{v_1}$, $\ldots$, $\ketbra{v_{n-1}}{v_{n-1}}$ must \emph{not} be real and non-negative (after all, they add up to the identity matrix, which has all off-diagonal entries equal to $0$). It follows that the measure-and-prepare channel $\Phi \in \Lin(M_n,M_n)$ defined by
\begin{equation}
    \Phi(X) = \sum_{j=0}^{n-1}\big(\bra{v_j}X\ket{v_j}\big) \ketbra{j}{j}
\end{equation} 
is CP-preserving (after all, if $X \succeq 0$ then $\bra{v_j}X\ket{v_j} \geq 0$ for all $j \in \{ 0, \ldots, n-1 \}$, so $\Phi(X)$ is completely positive), but is not CPCP (its Choi matrix is $J(\Phi) = \sum_{j=0}^{n-1} \overline{\ketbra{v_j}{v_j}} \otimes \ketbra{j}{j}$, which does not have all non-negative real entries, so $J(\Phi)$ is not completely positive).   

If we consider only CPCP \emph{quantum channels} (i.e., we add in the trace-preservation requirement) then CPCP maps simplify considerably. 

\begin{theorem}\label{thm:CPCP_channel_kraus}
    Suppose $\Phi \in \Lin(M_n,M_m)$ is a quantum channel. The following are equivalent:
    \begin{enumerate}
        \item[(a)] $\Phi$ is CPCP.
        
        \item[(b)] $\Phi$ has a family of entrywise non-negative Kraus operators with at most $1$ non-zero entry in each row.
    \end{enumerate}
\end{theorem}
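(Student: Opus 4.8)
The plan is to deduce both directions from the Choi/Kraus characterization in Theorem~\ref{thm:CPCP_choi}, with trace-preservation supplying the extra combinatorial structure.

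The implication (b)$\Rightarrow$(a) should be immediate: if $\Phi$ admits a family of entrywise non-negative Kraus operators (the sparsity condition in~(b) is not even needed for this direction), then $\Phi$ satisfies condition~(c) of Theorem~\ref{thm:CPCP_choi} --- it is completely positive simply by virtue of having a Kraus representation at all --- and hence $\Phi$ is CPCP.

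For (a)$\Rightarrow$(b), I would start from Theorem~\ref{thm:CPCP_choi}(c): a CPCP channel $\Phi$ has a family $\{A_1,\dots,A_k\} \subset M_{m,n}$ of real, entrywise non-negative Kraus operators, and since $\Phi$ is trace-preserving they obey the completeness relation $\sum_{j=1}^k A_j^*A_j = I_n$. The key step is to read off the off-diagonal entries of this identity. For columns $p \neq q$, realness of the $A_j$ gives
\[
0 = (I_n)_{p,q} = \sum_{j=1}^k \big(A_j^*A_j\big)_{p,q} = \sum_{j=1}^k\sum_{r=0}^{m-1} (A_j)_{r,p}\,(A_j)_{r,q},
\]
and since every term of this sum is non-negative, every term must vanish: $(A_j)_{r,p}(A_j)_{r,q} = 0$ for all $j$, all rows $r$, and all $p \neq q$. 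Hence in any fixed row of any fixed $A_j$ at most one entry can be nonzero, which is exactly condition~(b).

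I expect essentially no real obstacle here; the only point needing care is that the Kraus family handed to us by Theorem~\ref{thm:CPCP_choi}(c) genuinely satisfies $\sum_j A_j^*A_j = I_n$, which holds precisely because $\Phi$ is assumed to be a channel (trace-preserving). As an aside, running the same computation on the diagonal entries yields $\sum_j\sum_r (A_j)_{r,p}^2 = 1$ for each $p$, which pins down exactly which such Kraus families can occur.
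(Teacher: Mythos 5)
Your proposal is correct and follows essentially the same route as the paper: both directions reduce to Theorem~\ref{thm:CPCP_choi}(c), and the sparsity condition is extracted from the completeness relation $\sum_j A_j^*A_j = I_n$ by noting that the off-diagonal entries are sums of non-negative terms that must all vanish (the paper phrases this as orthogonality of the non-negative columns of each $A_j$, which is the same computation). Your side remark about the completeness relation holding for \emph{any} Kraus family of a trace-preserving map is the right justification and matches the paper's implicit use of that fact.
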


\begin{proof}
    Theorem~\ref{thm:CPCP_choi} gives us all parts of this theorem except for the fact that the entrywise non-negative Kraus operators of $\Phi$ have at most one non-zero entry in each row. To see why this is the case, note that if $\Phi$ is trace-preserving then its Kraus operators $\{A_k\}$ satisfy $\sum_k A_k^*A_k = I$. If we denote the $i$-th column of $A_k$ by $\mathbf{a}_{i,k}$ then this is equivalent to
    \begin{equation} 
        \sum_k \mathbf{a}_{i,k} \cdot \mathbf{a}_{j,k} = \begin{cases}
            1 \ \ \text{if} \ i = j, \\
            0 \ \ \text{otherwise}.
        \end{cases}
    \end{equation} 
    Since $\mathbf{a}_{i,k},\mathbf{a}_{j,k} \geq 0$ for all $i$, $j$, and $k$, this implies $\mathbf{a}_{i,k} \cdot \mathbf{a}_{j,k} = 0$ whenever $i \neq j$. By again using the fact that $\mathbf{a}_{i,k},\mathbf{a}_{j,k} \geq 0$, this then implies that, for each $\ell$, either the $\ell$-th entry of $\mathbf{a}_{i,k}$ equals $0$ or the $\ell$-th entry of $\mathbf{a}_{j,k}$ equals $0$. Since $i$ and $j$ were arbitrary, this simply means that the $\ell$-th row of $A_k$ contains at most one non-zero entry, as claimed.
\end{proof}

When a CPCP quantum channel is furthermore unital (i.e., has $\Phi(I) = I$), it has an even simpler form that can be expressed in terms of \emph{Schur maps}, which are maps of the form 
\begin{equation} 
    \Phi_A(\rho) = A \odot \rho, 
\end{equation} 
where ``$\odot$'' denotes entrywise multiplication. Such a map $\Phi_A$ is completely positive if and only if $A \succeq 0$ \cite{Pau03}, and it is trace-preserving if and only if it is unital if and only if the diagonal entries of $A$ all equal $1$. We also say that $\Psi \in \Lin(M_n,M_n)$ is a \emph{permutation channel} if it can be written in the form 
\begin{equation} 
\Psi(\rho) = P\rho P^*, 
\end{equation} 
where $P$ is a permutation matrix. Equivalently, there is a permutation $\sigma$ such that, for all $i$ and $j$, the $(i,j)$-entry of $\Psi(\rho)$ equals $\rho_{\sigma(i),\sigma(j)}$.

\begin{theorem}\label{thm:CPCP_unital_channel}
    Suppose $\Phi \in \Lin(M_n,M_n)$ is a unital quantum channel. The following are equivalent:
    \begin{enumerate}
        \item[(a)] $\Phi$ is CPCP.
        
        \item[(b)] $\Phi$ has a family of entrywise non-negative Kraus operators with at most $1$ non-zero entry in each row and in each column.
        
        \item[(c)] $\Phi$ is a convex combination of maps of the form $\Psi \circ \Phi_A$, where $\Phi_A$ is a Schur channel, $\Psi$ is a permutation channel, and $A$ is completely positive.
    \end{enumerate}
\end{theorem}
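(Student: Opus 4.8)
The plan is to establish (a)$\Leftrightarrow$(b) by upgrading Theorem~\ref{thm:CPCP_channel_kraus}, and then (b)$\Leftrightarrow$(c) by an explicit Kraus-operator computation in one direction and a ``factor each monomial Kraus operator as permutation $\times$ diagonal'' argument in the other.

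For (a)$\Leftrightarrow$(b): I would work with the family of entrywise non-negative Kraus operators $\{A_k\}$ obtained by factoring the Choi matrix $J(\Phi)$ as $\sum_k \mathbf{v}_k\mathbf{v}_k^\top$ with $\mathbf{v}_k\ge 0$ and $A_k=\mathrm{mat}(\mathbf{v}_k)$, as in Theorem~\ref{thm:CPCP_choi}. Trace-preservation ($\sum_k A_k^*A_k=I$) already forces at most one nonzero entry per row of each $A_k$ --- this is exactly Theorem~\ref{thm:CPCP_channel_kraus}. Unitality supplies the additional relation $\sum_k A_kA_k^*=I$, and the mirror-image computation --- expand $(A_kA_k^*)_{ij}$ as the dot product of the $i$-th and $j$-th rows of $A_k$, note it vanishes for $i\neq j$ by non-negativity, and conclude that the rows of each $A_k$ have pairwise disjoint supports, i.e.\ at most one nonzero per column --- upgrades this to (b), with the same family witnessing both conditions. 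The converse (b)$\Rightarrow$(a) is immediate from Theorem~\ref{thm:CPCP_choi}(c), since any map with entrywise non-negative Kraus operators is CPCP.

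For (c)$\Rightarrow$(b): writing a completely positive $A$ as $\sum_\ell \mathbf{b}_\ell\mathbf{b}_\ell^\top$ with $\mathbf{b}_\ell\ge 0$ and using $(\mathbf{b}\mathbf{b}^\top)\odot\rho=\mathrm{diag}(\mathbf{b})\,\rho\,\mathrm{diag}(\mathbf{b})$, the Schur map $\Phi_A$ has the diagonal non-negative Kraus operators $\{\mathrm{diag}(\mathbf{b}_\ell)\}$; post-composing with a permutation channel $\Psi_P$ replaces these by $\{P\,\mathrm{diag}(\mathbf{b}_\ell)\}$, which are still entrywise non-negative with at most one nonzero per row and per column; and a convex combination merely rescales and unions these Kraus families. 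For (b)$\Rightarrow$(c): given such a family $\{A_k\}$, the support of each $A_k$ is a partial matching between rows and columns, which I would extend to a full permutation $\pi_k$ with permutation matrix $P_k$; then $D_k:=P_k^\top A_k$ is forced to be a non-negative \emph{diagonal} matrix (the nonzeros of $A_k$ lie exactly where $\pi_k$ sends columns to rows, so $P_k^\top$ pulls them all onto the diagonal), giving $A_k=P_kD_k$ and $A_k\rho A_k^*=P_k\big((\mathbf{d}_k\mathbf{d}_k^\top)\odot\rho\big)P_k^*$ with $\mathbf{d}_k$ the diagonal of $D_k$. Summing over $k$ and grouping by the permutation $\pi_k$, linearity of $A\mapsto\Phi_A$ gives $\Phi=\sum_\sigma \Psi_\sigma\circ\Phi_{A_\sigma}$ with $A_\sigma:=\sum_{k:\pi_k=\sigma}\mathbf{d}_k\mathbf{d}_k^\top$ completely positive, and normalizing by the number of distinct permutations that occur turns this into a genuine convex combination.

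The step I expect to need the most care is the very last one. Grouping by permutation does \emph{not} in general yield Schur multipliers $A_\sigma$ with all-ones diagonal: trace-preservation and unitality only constrain $\sum_\sigma(\text{diagonal of }A_\sigma)=\mathbf{1}$ and $\sum_\sigma P_\sigma(\text{diagonal of }A_\sigma)=\mathbf{1}$, not each diagonal individually, and a two-Kraus-operator example --- one $A_k$ diagonal and the other a transposition-type monomial matrix, with unequal magnitudes --- shows the $A_\sigma$ really can have non-constant diagonal. So the individual maps $\Phi_{A_\sigma}$ need not be trace-preserving on their own; the robust reading of (c) is that the matrices $A$ are arbitrary completely positive matrices and the normalization is carried by the convex weights, with the fact that the overall convex combination is a channel being inherited from $\Phi$. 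Under this reading one should also recheck (c)$\Rightarrow$(a): $\Phi_A$ for $A$ completely positive is CPCP as a linear map by Theorem~\ref{thm:CPCP_choi}, permutation channels are CPCP, and $\CPCP$ is a convex cone closed under composition by Corollary~\ref{convex_composition}, so the combination is CPCP.
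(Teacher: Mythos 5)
Your argument follows the paper's own proof essentially step for step: (a)$\Leftrightarrow$(b) is obtained exactly as in the paper by rerunning the row argument from Theorem~\ref{thm:CPCP_channel_kraus} on the unitality relation $\sum_k A_kA_k^* = I$ to get the column condition, and (b)$\Leftrightarrow$(c) via the factorization $A_k = P_kD_k$ together with the identity $D_k\rho D_k^* = (\mathbf{d_k}\mathbf{d_k}^*)\odot\rho$ is also precisely the paper's route. So mathematically you and the paper are in agreement throughout.

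Where you went beyond the paper is the normalization caveat at the end, and your worry is justified: the paper's own proof produces $\Phi = \sum_k \Psi_{P_k}\circ\Phi_{\mathbf{d_k}\mathbf{d_k}^*}$, a \emph{sum} of maps $\Psi\circ\Phi_A$ in which each Schur multiplier is rank one and hence (except in trivial cases) does not have unit diagonal, so the paper also only establishes your ``robust reading'' of (c) --- a conic combination of $\Psi\circ\Phi_A$ with $A$ completely positive --- rather than a convex combination of permuted Schur \emph{channels}. Moreover, your suspicion that this cannot be repaired by regrouping is correct: the literal statement fails. Take $n=3$ and the unital CPCP channel with Kraus operators $A_1 = \mathrm{Diag}(1/\sqrt{2},\,1/\sqrt{2},\,1)$ and $A_2 = P_{(12)}\,\mathrm{Diag}(1/\sqrt{2},\,1/\sqrt{2},\,0)$. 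Its action on $E_{3,3}$ and $E_{1,1}$ forces any decomposition into permuted Schur channels to use only the permutations $e$ and $(12)$, each with weight $1/2$, and then matching the coefficient $1/\sqrt{2}$ with which $\Phi$ scales the $(1,3)$ entry forces the identity-permutation multiplier $B$ to satisfy $B_{1,3}=\sqrt{2}$, which is impossible for $B\succeq 0$ with unit diagonal. So the ``robust reading'' is not merely the safer interpretation of (c); it is the only correct one, and under it your proof is complete, including the recheck of (c)$\Rightarrow$(a) via Theorem~\ref{thm:CPCP_choi} and Corollary~\ref{convex_composition}.
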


\begin{proof} 
    The equivalence of (a) and (b) follows almost immediately from Theorem~\ref{thm:CPCP_channel_kraus}: we showed in the proof of that theorem that if the entrywise non-negative Kraus operators $\{A_k\}$ of $\Phi$ satisfy $\sum_k A_k^*A_k = I$ then they each have at most one non-zero entry in each row, and a similar argument shows that if $\Phi$ is unital then $\sum_k A_kA_k^* = I$, so each $A_k$ has at most one non-zero entry in each \emph{column} as well.
    
    To see that (b) and (c) are equivalent, notice that part~(b) is equivalent to saying that $\Phi$ has a family of Kraus operators of the form $A_k = P_k D_k$, where $P_k$ is a permutation matrix and $D_k$ is an entrywise non-negative diagonal matrix. If $\mathbf{d_k} = \mathrm{diag}(D_k)$ then it is straightforward to check that 
    \begin{equation}
    D_k\rho D_k^* = \big(\mathbf{d_k}\mathbf{d_k}^*\big) \odot \rho, 
    \end{equation} 
    where $\mathbf{d_k}\mathbf{d_k}^*$ is completely positive, and this argument can be reversed by writing $A$ as a convex combination of rank-$1$ non-negative matrices like $\mathbf{d_k}\mathbf{d_k}^*$.
\end{proof}

In particular, the above result shows that every unital CPCP quantum channel is a ``strictly incoherent operation'' \cite{YMG16}. Such channels are one of the natural choices of free operations in the resource theory of coherence \cite{CG16}.


\subsection[Examples of CPCP channels]{Examples of CPCP channels} 

In this subsection, we present several popular types of channels that are CPCP.

\begin{itemize} 
\item Identity, Pauli-$X$, and classical error channels: The identity channel is 
\begin{equation} 
\Phi(\rho) = \rho,   
\end{equation}  
the bit-flip channel, or Pauli-$X$  channel, is 
\begin{equation} 
\Phi(\rho) = X\rho X, 
\end{equation}  
where $X$ is the Pauli X matrix (see the upcoming Equation~\eqref{eq:pauli_mat}). The classical error channel is its convex combination 
\begin{equation} 
\Phi(\rho) = p\rho + (1-p)X\rho X \, \text{ for } \, p \in [0,1],   
\end{equation} 
which is CPCP since the set of CPCP maps is convex. 

\item Measure (in computational basis)-and-prepare (a CP state) channel: 
\begin{equation} 
\Phi(\rho) = \sum_{j=0}^{m-1} (\bra{j} \rho \ket{j})\sigma_j,  
\end{equation}  
where $\sigma_0, \ldots, \sigma_{m-1} \in \CP_n$.

\item 
The partially dephasing channel: 
\begin{equation} 
\Phi(\rho) = p\rho + (1-p)\tr(\rho) \, \frac{I_n}{n} \, \text{ for } \, p \in [0,1],   
\end{equation} 
which is CPCP by convexity.

\item
The partial trace: 
\begin{equation} 
\Phi(\rho) = \tr_1(\rho)   
\end{equation} 
since it has family of entrywise nonnegative Kraus operators with at most $1$ non-zero entry in each row.

\item Tensor/prepare channels: For a fixed $\sigma \in \CP$, 
\begin{equation} 
\Phi(\rho) = \rho \otimes \sigma.  
\end{equation}  
This is because the identity channel is CPCP and $\CP_m \otimes \CP_n \subseteq \CP_{mn}$. 
To see this, notice that $\mathbf{x}\mathbf{x}^{\top} \otimes \mathbf{y}\mathbf{y}^{\top} = (\mathbf{x} \otimes \mathbf{y}) (\mathbf{x} \otimes \mathbf{y})^{\top}$.

\item Stochastic, permutation, and SWAP channels: 
\begin{equation} 
\Phi(\rho) = S \rho S^{\top},   
\end{equation} 
where $S$ is a stochastic matrix (which includes permutations and doubly stochastic matrices as special cases).  
The SWAP channel: This channel is defined on product states as 
\begin{equation} 
\Phi(\rho \otimes \sigma) = \sigma \otimes \rho  
\end{equation} 
and is extended linearly.  
As this is a special case of a permutation channel, it is also CPCP. 

\item Schur maps and fully decohering channels: For $A \in \CP$ with diagonal entries equal to $1$, 
\begin{equation} 
\Phi_A(\rho) = A \odot \rho.
\end{equation} 
This case is detailed earlier in Theorem~\ref{thm:CPCP_unital_channel} and the preceding discussion. 
A special case is the fully decohering map 
\begin{equation} 
\Phi_{I}(\rho) = \mathrm{Diag}(\rho), 
\end{equation} 
where $\mathrm{Diag}(\rho)$ zeroes off the off-diagonal entries of $\rho$ and leaves the diagonal entries alone. 

\item Projections onto symmetric subspaces: The projection onto the symmetric subspace of $n \times n$ Hermitian matrices is defined as  
\begin{equation} 
S_n = \frac{1}{2} \I \otimes \I + \frac{1}{2} 
\left(    
\sum_{j,k = 0}^{n-1} \ketbra{k}{l} \otimes \ketbra{l}{k} 
\right).  
\end{equation} 
Thus, the quantum \emph{subchannel}\footnote{Note this map is not trace-preserving, but is nonetheless interesting in the study of quantum information.} which projects onto the symmetric subspace 
\begin{equation} 
\Phi(\rho) = S_n \rho S_n    
\end{equation}  
is CPCP (since it has a single Kraus operator which is entrywise non-negative).  
\end{itemize} 

As mentioned in Corollary~\ref{convex_composition_channel}, 
one can take the convex combination and/or compositions of any of these channels to create other CPCP channels.


\subsection{Doubly non-negative maps}

It is difficult to determine whether or not a linear map is CPCP (since it is NP-hard to determine whether or not its Choi matrix is CP), so it may be useful to instead consider the maps $\Phi \in \Lin(M_n,M_m)$ with the property that $(\I_k \otimes \Phi)(\rho) \in \DNN$ whenever $\rho \in \
\DNN$,  for all $k \in \N$. 
We call such a map \emph{completely positive doubly non-negative}, and denote the set of such maps as $\CPDNN$. The following theorem (which is directly analogous to Theorem~\ref{thm:CPCP_choi} for CPCP maps) tells us that it is simple to determine whether or not a map is $\CPDNN$. 

\begin{theorem}\label{thm:DNN_choi}
    Suppose $\Phi \in \Lin(M_n,M_m)$. The following are equivalent:
    
    \begin{enumerate}
        \item[(a)] $\Phi$ is $\CPDNN$. 
          
        \item[(b)] $J(\Phi)$ is doubly non-negative (as a matrix). 
    \end{enumerate}
\end{theorem}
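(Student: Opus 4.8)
The plan is to follow the same template as the proof of Theorem~\ref{thm:CPCP_choi}, since $\CPDNN$ is to $\DNN$ matrices what $\CPCP$ is to $\CP$ matrices. The one place where the argument genuinely has to change is the direction that passes through a concrete representation of $\Phi$: for Theorem~\ref{thm:CPCP_choi} we used that $J(\Phi)$ being $\CP$ yields entrywise non-negative Kraus operators, but a $\DNN$ matrix need not be $\CP$ (in dimensions $\geq 5$), so here I would instead argue directly at the level of matrix entries.

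For $(a)\Rightarrow(b)$: note that $\sum_{i,j=0}^{n-1}\ketbra{i}{j}\otimes\ketbra{i}{j}$ equals $\ketbra{\Omega}{\Omega}$ with $\ket{\Omega}=\sum_{i}\ket{i}\otimes\ket{i}$, so it is positive semidefinite (rank one) and has all entries equal to $0$ or $1$; hence it is doubly non-negative. Applying the $\CPDNN$ hypothesis with $k=n$ to this matrix gives that $J(\Phi)=(\I_n\otimes\Phi)\big(\sum_{i,j}\ketbra{i}{j}\otimes\ketbra{i}{j}\big)$ is doubly non-negative. (If one prefers to keep the trace-one normalization, apply the hypothesis to $\tfrac{1}{n}\ketbra{\Omega}{\Omega}$ and rescale, since double non-negativity is preserved by positive scalars.)

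For $(b)\Rightarrow(a)$: split ``$J(\Phi)$ is $\DNN$'' into its two defining properties and handle them separately. The condition $J(\Phi)\succeq 0$ is exactly complete positivity of $\Phi$ \cite{Cho75}, so $\I_k\otimes\Phi$ is a positive map for every $k$ and thus preserves positive semidefiniteness. The condition $J(\Phi)\geq 0$ is equivalent, after reading off entries of the Choi matrix, to $\big(\Phi(\ketbra{i}{j})\big)_{a,b}=J(\Phi)_{(i,a),(j,b)}\geq 0$ for all $i,j,a,b$; by linearity this means $\Phi$ maps entrywise non-negative matrices to entrywise non-negative matrices, and writing $\rho\in M_k\otimes M_n$ in block form $\rho=\sum_{p,q}\ketbra{p}{q}\otimes\rho_{pq}$ (so that $\rho\geq 0$ iff every block $\rho_{pq}\geq 0$) shows the same for $\I_k\otimes\Phi$. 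Combining the two, $(\I_k\otimes\Phi)(\rho)$ is $\DNN$ whenever $\rho$ is, for every $k$, which is $(a)$.

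The computations involved --- the index bookkeeping identifying $\big(\Phi(\ketbra{i}{j})\big)_{a,b}$ with an entry of $J(\Phi)$, and the block decomposition of a bipartite $\DNN$ matrix --- are routine. The only point that needs care is conceptual rather than technical: recognizing that, unlike in the $\CP$/$\CPCP$ setting, one must not try to route $(b)\Rightarrow(a)$ through a factorization of $J(\Phi)$ or of $\Phi$, and that the clean workaround is to treat positive semidefiniteness and entrywise non-negativity as two independent constraints that are each preserved for an elementary reason.
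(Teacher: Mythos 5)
Your proof is correct, and the forward direction $(a)\Rightarrow(b)$ is identical to the paper's. The reverse direction, however, takes a genuinely different (and somewhat more explicit) route. The paper proves $(b)\Rightarrow(a)$ in one line via the identity $\Phi(\rho) = \tr_1\big((\rho^{\top} \otimes I_m)J(\Phi)\big)$: positive semidefiniteness of $\Phi(\rho)$ is then Choi's theorem, and entrywise non-negativity follows because the product of two entrywise non-negative matrices and its partial trace are entrywise non-negative. You instead decouple the two halves of ``doubly non-negative'' and verify each one separately --- $J(\Phi) \succeq 0$ gives complete positivity and hence positivity of $\I_k \otimes \Phi$, while $J(\Phi) \geq 0$ gives, by expanding over the matrix units $\ketbra{i}{j}$, that $\Phi$ (and then $\I_k \otimes \Phi$, blockwise) preserves entrywise non-negativity. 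Your version is a bit longer but has two advantages: it makes the $k \geq 2$ case explicit, whereas the paper's one-liner only addresses $k=1$ and leaves the extension to $\I_k \otimes \Phi$ implicit; and it isolates exactly why no factorization of $J(\Phi)$ is needed here, which is the conceptual point you correctly flag as the difference from Theorem~\ref{thm:CPCP_choi}. The paper's version buys brevity by reusing a single standard identity. Both are sound.
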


\begin{proof}
    The fact that (a) implies (b) follows simply from the fact that $\sum_{i,j=0}^{n-1}\ketbra{i}{j} \otimes \ketbra{i}{j}$ is doubly non-negative, so $J(\Phi) = (I_n \otimes \Phi)\big(\sum_{i,j=0}^{n-1}\ketbra{i}{j} \otimes \ketbra{i}{j}\big)$ is doubly non-negative as well. For the reverse implication, recall the formula $\Phi(\rho) = \tr_1\big((\rho{^\top} \otimes I_m)J(\Phi)\big)$. If $J(\Phi)$ and $\rho$ are both doubly non-negative then it follows immediately that $\Phi(\rho)$ is as well. 
\end{proof}

If $n = m = 2$ then the set of CPDNN maps coincides with the set of CPCP maps simply because their Choi matrices are $4 \times 4$ in this case, and $\CP_4 = \DNN_4$. However, in all other cases there are maps that are CPDNN but not CPCP, as first demonstrated by the quantum channel ${\Phi \in \Lin(M_2,M_3)}$ with Choi matrix
\begin{equation}
    J(\Phi) = \frac{1}{6}\left[\begin{array}{ccc|ccc}
        3 & 0 & 0 & 0 & 0 & 0 \\
        0 & 1 & 1 & 0 & 0 & 1 \\
        0 & 1 & 2 & 1 & 0 & 0 \\\hline
        0 & 0 & 1 & 2 & 1 & 0 \\
        0 & 0 & 0 & 1 & 1 & 1 \\
        0 & 1 & 0 & 0 & 1 & 3
    \end{array}\right],
\end{equation} 
which is doubly non-negative but not completely positive. 
Indeed, the bottom-right $5 \times 5$ submatrix of this Choi matrix is the same (up to scaling) as the density matrix~\eqref{eq:dnn_not_cp}, and is thus DNN but not CP for the same reasons. In particular, this means that this channel does \emph{not} have a family of real entrywise non-negative Kraus operators. 
Note that one can add diagonal blocks with $\ket{0}\bra{0}$ to construct such examples of channels with a larger input dimension and pad each block with rows and columns of $0$s if one were to increase the output dimension. 
The only remaining case is for $\Lin(M_3, M_2)$. 
For this, consider the same Choi matrix above, but with a different partitioning:
\begin{equation}
    J(\Phi) = \frac{1}{6}\left[\begin{array}{cc|cc|cc}
        3 & 0 & 0 & 0 & 0 & 0 \\
        0 & 1 & 1 & 0 & 0 & 1 \\\hline
        0 & 1 & 2 & 1 & 0 & 0 \\
        0 & 0 & 1 & 2 & 1 & 0 \\\hline
        0 & 0 & 0 & 1 & 1 & 1 \\
        0 & 1 & 0 & 0 & 1 & 3
    \end{array}\right]. 
\end{equation} 
This Choi matrix is DNN but not CP as previously discussed. Note, however, that this is not a quantum channel as it does not satisfy trace-preservation. We leave it as a open question whether one can find a CPDNN quantum channel that is not CPCP for this specific choice of input and output dimension. Indeed, the difficulty is that there is no way to permute the rows and columns of this $J(\Phi)$ so as to create a trace-preserving map, and there are only a few other known choices for matrices in the set $\DNN \setminus \CP$.


\subsection{Maps on qubits} \label{sec:qubit_maps}
  
The structure of CP-preserving and CPCP maps simplifies quite a bit in the qubit-input qubit-output  (i.e., $m = n = 2$) case. We call such maps \emph{qubit channels}, for convenience.
Part of the reason for this is that, as noted earlier, they are simply the $\CPDNN$ maps in this case. Much more is known about this set in these small dimensions too. For example, the maximal CP-rank of a $4 \times 4$ matrix is $4$, so every {such} CPCP map has a set of $4$ (or fewer) entrywise non-negative Kraus operators.

To help illuminate the structure of the sets of CP-preserving and CPCP channels acting on qubits even further, recall \cite{NC00} that every density matrix $\rho \in \D_2$ can be written in the form
\begin{equation}
    \rho = \frac{1}{2}\big(I + \rho_x X + \rho_y Y + \rho_z Z\big),
\end{equation} 
where
\begin{equation}\label{eq:pauli_mat}
    X = \begin{bmatrix}
        0 & 1 \\ 1 & 0
    \end{bmatrix}, \quad Y =  \begin{bmatrix}
        0 & -i \\ i & 0
    \end{bmatrix}, \quad \text{and} \quad Z =  \begin{bmatrix}
        1 & 0 \\ 0 & -1
    \end{bmatrix}
\end{equation} 
are the Pauli matrices and $\rho_x$, $\rho_y$, and $\rho_z$ are the corresponding (real) coefficients of $\rho$ in the Pauli basis $\{I,X,Y,Z\}$. Positive semidefiniteness of $\rho$ is equivalent to $\rho_x^2 + \rho_y^2 + \rho_z^2 \leq 1$, so when written in this way, the set of qubit density matrices naturally form a ball of radius at most~$1$, called the \emph{Bloch ball}. 
It is straightforward to see that $\rho \in \CP_2 = \DNN_2$ if and only if we further impose $\rho_y = 0$ and $\rho_x \geq 0$, so $\CP_2$ makes up the $2$-dimensional ``wedge'' of the Bloch ball containing the $z$-axis and the positive half of the $x$-axis, as shown in Figure~\ref{fig:bloch_sphere}.

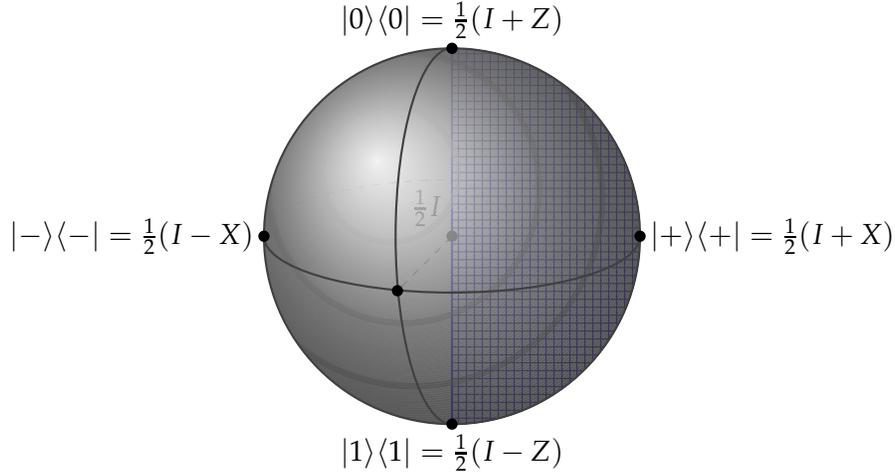
\begin{figure}[!htb]
	\centering
	\begin{tikzpicture}[scale=2.5]
    	\draw[black!75!white,dashed,opacity=0.5] (-1,0) arc (180:0:1cm and 0.3cm);
    	\draw[black!75!white,dashed,opacity=0.5] (0,1) arc (90:-90:0.3cm and 1cm);
    	
    	\draw[black!75!white,dashed] (0,0) -- (0,1);
    	\draw[black!75!white,dashed] (0,0) -- (-0.29,-0.29);
    	
        \begin{scope}
            \clip(0,-1) rectangle (1,1);
            \fill[white,opacity=0.5](0,0) circle (1cm);
            \draw[pattern=horizontal lines, pattern color=blue](0,0) circle (1cm);
            \draw[pattern=vertical lines, pattern color=blue](0,0) circle (1cm);
        \end{scope}
    	\draw[blue,thick] (0,1) -- (0,-1);
    	\fill (0,0) node[anchor=south east]{$\frac{1}{2}I$} circle (0.03cm);
    	
    	\shade[ball color=white!60!black,opacity=0.5] (0,0) circle (1cm);
    	
    	\draw[black!75!white,thick] (-1,0) arc (180:360:1cm and 0.3cm);
    	\draw[black!75!white,thick] (0,1) arc (90:270:0.3cm and 1cm);
    	\draw[black!75!white,thick] (0,0) circle (1cm);
    	
    	\fill (1,0) node[anchor=west]{$\ketbra{+}{+} = \frac{1}{2}(I+X)$} circle (0.03cm);
    	\fill (-1,0) node[anchor=east]{$\ketbra{-}{-} = \frac{1}{2}(I-X)$} circle (0.03cm);
    	\fill (-0.29,-0.29) circle (0.03cm);
    	\fill (0,1) node[anchor=south]{$\ketbra{0}{0} = \frac{1}{2}(I+Z)$} circle (0.03cm);
    	\fill (0,-1) node[anchor=north]{$\ketbra{1}{1} = \frac{1}{2}(I-Z)$} circle (0.03cm);
	\end{tikzpicture}
	
	\caption{The set of completely positive qubit density matrices forms the 2D ``wedge'' of the Bloch ball (cross-hashed on the right here) containing $\ketbra{0}{0}$, $\ketbra{1}{1}$, and $\ketbra{+}{+}$.}\label{fig:bloch_sphere}
\end{figure}

Since every quantum channel is a linear transformation acting on the vector space $M_2$, we can represent it as a matrix with respect to any basis of $M_2$ of our choosing. If we represent it with respect to the Pauli basis $\{I,X,Y,Z\}$ then its standard matrix has the form
\begin{align}\label{eq:qubit_map_representation}
    [\Phi] = \begin{bmatrix}
        1 & 0 & 0 & 0 \\
        t_x & T_{x,x} & T_{x,y} & T_{x,z} \\
        t_y & T_{y,x} & T_{y,y} & T_{y,z} \\
        t_z & T_{z,x} & T_{z,y} & T_{z,z}
    \end{bmatrix},
\end{align}
where the entries in this matrix keep track of how $\Phi$ acts on the different Pauli matrices. For example, $\Phi(I) = I + t_xX + t_yY + t_zZ$ and $\Phi(X) = T_{x,x}X + T_{y,x}Y + T_{z,x}Z$, and the coefficients of $\Phi(Y)$ and $\Phi(Z)$ similarly come from the third and fourth columns, respectively, of $[\Phi]$. The fact that $\{I,X,Y,Z\}$ is a basis of $M_2$ guarantees that $[\Phi]$ completely determines $\Phi$, and the special form of the first row of $[\Phi]$ comes from trace-preservation of $\Phi$ (recall that each of $X$, $Y$, and $Z$ are traceless, so they must be mapped to traceless matrices). Furthermore, since $\{I,X,Y,Z\}$ is a basis of the \emph{real} vector space of $2 \times 2$ Hermitian matrices, the entries of $[\Phi]$ are necessarily real whenever $\Phi$ is Hermiticity-preserving (which will always be the case for us).

The following theorem characterizes what CP-preserving and CPCP qubit channels look like when represented in this way.

\begin{theorem}\label{thm:CPCP_characterize_qubit} 
    Suppose $\Phi \in \Lin(M_2,M_2)$ is a quantum channel with standard matrix $[\Phi]$ as in~\eqref{eq:qubit_map_representation}. Then
    \begin{enumerate}
        \item[(a)] $\Phi$ is CP-preserving if and only if
        \begin{equation}
            [\Phi] = \begin{bmatrix}
                1 & 0 & 0 & 0 \\
                t_x & T_{x,x} & T_{x,y} & T_{x,z} \\
                0 & 0 & T_{y,y} & 0 \\
                t_z & T_{z,x} & T_{z,y} & T_{z,z}
            \end{bmatrix},
        \end{equation}
        where $t_x \geq |T_{x,z}|$ and $T_{x,x} \geq -\sqrt{t_x^2-T_{x,z}^2}$, and
        
        \item[(b)] $\Phi$ is CPCP if and only if it is CPDNN, if and only if
        \begin{equation}
            [\Phi] = \begin{bmatrix}
                1 & 0 & 0 & 0 \\
                t_x & T_{x,x} & 0 & T_{x,z} \\
                0 & 0 & T_{y,y} & 0 \\
                t_z & 0 & 0 & T_{z,z}
            \end{bmatrix},
        \end{equation} 
        where $t_x \geq |T_{x,z}|$, $T_{x,x} \geq |T_{y,y}|$, and $t_z \geq |T_{z,z}|-1$.
    \end{enumerate}
\end{theorem}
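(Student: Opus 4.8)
The plan is to translate the channel $\Phi$, given in the Pauli representation~\eqref{eq:qubit_map_representation}, into its Choi matrix $J(\Phi) \in M_2 \otimes M_2$, and then apply Theorem~\ref{thm:CPCP_choi} for part~(b) and a hands-on positive-semidefiniteness/entrywise-positivity analysis for part~(a). Since $J(\Phi)$ is $4 \times 4$, the key simplification available throughout is $\CP_4 = \DNN_4$, so ``CPCP'' and ``CPDNN'' coincide and I only ever need to check that $J(\Phi)$ is simultaneously PSD and entrywise non-negative. Concretely, I would write $J(\Phi) = \tfrac12\sum_{\mu,\nu}[\Phi]_{\nu\mu}\,\overline{\sigma_\mu} \otimes \sigma_\nu$ where $\sigma_0=I,\sigma_1=X,\sigma_2=Y,\sigma_3=Z$ (up to the usual convention-dependent conjugate), and record the $16$ entries of $J(\Phi)$ as explicit real-linear combinations of $t_x,t_y,t_z$ and the $T_{ij}$'s.

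For part~(a) (CP-preserving), I would first argue the structural constraint that forces the displayed sparsity pattern: $\Phi$ must map $\CP_2$ into $\CP_2$, and by the Bloch-wedge description, $\CP_2$ is exactly the slice $\{\rho_y = 0,\ \rho_x \ge 0\}$. Feeding in the extreme free states $\ketbra{0}{0}$, $\ketbra{1}{1}$, $\ketbra{+}{+}$ (and using linearity/convexity) forces $\Phi(I)$, $\Phi(X)$, $\Phi(Z)$ to have vanishing $Y$-component and non-negative $X$-component after the appropriate combinations, which kills the entries $t_y$, $T_{y,x}$, $T_{y,z}$, and — because $\Phi$ must also preserve $\CP_2$ on the whole wedge, including states with nonzero $Y$-input being excluded — it also kills $T_{x,y}$ and $T_{z,y}$ only insofar as $Y$ never enters a free input; here one must be slightly careful, since CP-preserving only constrains behaviour on the wedge, so I expect the $T_{x,y}$ and $T_{z,y}$ entries to actually survive in part~(a) (matching the displayed matrix, which indeed retains $T_{x,y}$ and $T_{z,y}$ but zeroes the entire $Y$-output row except $T_{y,y}$). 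The remaining inequalities $t_x \ge |T_{x,z}|$ and $T_{x,x} \ge -\sqrt{t_x^2 - T_{x,z}^2}$ then come from demanding that $\Phi(\ketbra{+}{+})$, and more generally $\Phi$ applied to boundary points of the wedge, stays inside $\CP_2$: parametrize the boundary pure states of the wedge as $\tfrac12(I + \cos\varphi\, X + \sin\varphi\, Z)$ with $\varphi\in[0,\pi]$, impose that the image has non-negative $X$-coordinate and lies in the Bloch ball, and extract the worst-case constraint — this is where the $\sqrt{t_x^2 - T_{x,z}^2}$ term appears, as the minimum over $\varphi$ of the $X$-coordinate of the image.

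For part~(b) (CPCP $=$ CPDNN), I would impose two separate conditions on $J(\Phi)$: entrywise non-negativity (an easy set of linear inequalities in the parameters, which immediately forces further entries to vanish — in particular the off-pattern entries $T_{x,y}$, $T_{z,x}$, $T_{z,y}$, $T_{y,\cdot}$, etc. — leaving exactly the four-parameter family $t_x, t_z, T_{x,x}, T_{x,z}, T_{y,y}, T_{z,z}$ shown), and then positive semidefiniteness of the resulting sparse $4\times4$ matrix. Because the surviving $J(\Phi)$ is sparse, its PSD conditions reduce to $2\times2$ principal minor conditions plus one genuinely $3$- or $4$-dimensional block; I expect these to collapse to precisely $t_x \ge |T_{x,z}|$, $T_{x,x} \ge |T_{y,y}|$, and $t_z \ge |T_{z,z}| - 1$. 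The main obstacle — and the step I would budget the most care for — is the bookkeeping of exactly which entries of $J(\Phi)$ are which linear combination of the Pauli-representation parameters (the conjugate on the first tensor factor and the factor of $\tfrac12$ are easy to get wrong), and then verifying that the entrywise-nonnegativity constraints really do force all the ``extra'' parameters to zero rather than merely bounding them; once the sparse form is pinned down, the PSD computation is a short determinantal check. Finally I would double-check consistency of (a) and (b) by confirming that every matrix of the form in~(b) satisfies the conditions in~(a), as it must since CPCP $\subseteq$ CP-preserving.
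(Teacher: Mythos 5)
Your proposal is correct and follows essentially the same route as the paper: part~(a) by feeding extreme points of the Bloch wedge through $\Phi$ and extracting the worst-case non-negativity constraint on the $X$-coordinate, and part~(b) by reducing CPCP to CPDNN via $\CP_4 = \DNN_4$ and translating double non-negativity of $J(\Phi)$ into conditions on $[\Phi]$ (the paper runs this same bookkeeping in the opposite direction, parametrizing DNN Choi matrices first). Two cosmetic slips to fix when writing it up: the wedge's extreme points are $\tfrac12(I+\cos\varphi\,X+\sin\varphi\,Z)$ with $\cos\varphi \geq 0$, i.e.\ $\varphi\in[-\pi/2,\pi/2]$, not $[0,\pi]$; and in part~(b) the three displayed inequalities all come from entrywise non-negativity of $J(\Phi)$ (positive semidefiniteness is already guaranteed by $\Phi$ being a channel), not from the PSD check.
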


\begin{proof}
    For part~(a), we can see that $\Phi$ being CP-preserving implies the indicated restrictions on $[\Phi]$ by plugging certain specific free states $\rho \in \CP_2$ into $\Phi$. In particular, if $\rho = I/2$ then 
    \begin{equation} 
    \Phi(\rho) = (I + t_x X + t_y Y + t_z Z)/2 \in \CP_2, 
    \end{equation} 
    so $t_y = 0$ and $t_x \geq 0$. If $\rho = (I \pm Z)/2$ then
    \begin{equation}
        \Phi(\rho) = \frac{1}{2}\big(I + (t_x \pm T_{x,z})X \pm T_{y,z}Y + (t_z \pm T_{z,z})Z\big) \in \CP_2,
    \end{equation} 
    so $T_{y,z} = 0$ and $t_x \pm T_{x,z} \geq 0$ (i.e., $|T_{x,z}| \leq t_x$). Finally, if $\rho = (I + \rho_xX + \rho_zZ)/2$, where ${\rho_z = -T_{x,z}/t_x}$ (if $t_x = 0$ and/or $|T_{x,z}| = t_x$ then choose $\rho = (I + \rho_x X)/2$ with $\rho_x > 0$ arbitrary instead to avoid division by $0$ and make this argument work) and $\rho_x = \sqrt{1 - \rho_z^2}$ then
    \begin{equation}
        \Phi(\rho) = \frac{1}{2}\big(I + (t_x + \rho_x T_{x,x} + \rho_z T_{z,z})X + \rho_x T_{y,x}Y + (t_x + \rho_x T_{z,x} + \rho_z T_{z,z})Z\big) \in \CP_2,
    \end{equation}
    so $T_{y,x} = 0$ and the coefficient of $X$ in $\Phi(\rho)$ is
    \begin{equation}
        t_x + \rho_x T_{x,x} + \rho_z T_{z,z} = t_x + \sqrt{1 - \frac{T_{x,z}^2}{t_x^2}}T_{x,x} - \frac{T_{x,z}}{t_x}T_{x,z},
    \end{equation}
    which we can see (by multiplying through by $t_x$) is non-negative if and only if 
    \begin{equation}
    t_x^2 - T_{x,z}^2 + \sqrt{t_x^2 - T_{x,z}^2}T_{x,x} \geq 0, 
    \end{equation}
    which is equivalent to $T_{x,x} \geq -\sqrt{t_x^2 - T_{x,z}^2}$, as claimed.
    
    In the other direction, to see that $[\Phi]$ having the form described in part~(a) of the theorem implies that $\Phi$ is CP-preserving, we note that every $\rho \in \CP_2$ has $[\rho] = (1,\rho_x,0,\rho_z)$ for some $\rho_x \geq 0$ and $\rho_x^2 + \rho_z^2 \leq 1$. Then
    \begin{equation}
        [\Phi][\rho] = \begin{bmatrix}
                1 & 0 & 0 & 0 \\
                t_x & T_{x,x} & T_{x,y} & T_{x,z} \\
                0 & 0 & T_{y,y} & 0 \\
                t_z & T_{z,x} & T_{z,y} & T_{z,z}
        \end{bmatrix}\begin{bmatrix}
            1 \\
            \rho_x \\
            0 \\
            \rho_z
        \end{bmatrix} = \begin{bmatrix}
            1 \\
            t_x+T_{x,x}\rho_x + T_{x,z}\rho_z \\
            0 \\
            t_z+T_{z,x}\rho_x + T_{z,z}\rho_z
        \end{bmatrix}.
    \end{equation} 
    To see that $\Phi(\rho) \in \CP_2$ (and thus $\Phi$ is CP-preserving) we thus just need to check that
    \begin{align}\label{ineq:CP_pres_qubit_to_check}
        t_x+T_{x,x}\rho_x + T_{x,z}\rho_z \geq 0 \quad \text{whenever} \quad \rho_x \geq 0, \rho_x^2 + \rho_z^2 \leq 1.
    \end{align}
    To this end, just notice that if $t_x \geq 0$ and $|T_{x,z}| \leq t_x$ then the inequality $T_{x,x} \geq -\sqrt{t_x^2-T_{x,z}^2}$ is equivalent to $T_{x,x} \geq 0$ or $\sqrt{T_{x,x}^2 + T_{x,z}^2} \leq t_x$. Each of these inequalities straightforwardly imply Inequality~\eqref{ineq:CP_pres_qubit_to_check}, which completes the proof of part~(a).
    
    For part~(b), we note that the Choi matrix $J(\Phi)$ of $\Phi$ is $4 \times 4$ and thus completely positive if and only if it is doubly non-negative, so Theorems~\ref{thm:CPCP_choi} and~\ref{thm:DNN_choi} tell us that $\Phi$ is CPCP if and only if it is doubly non-negative. To determine the form of $[\Phi]$, we use Theorem~\ref{thm:DNN_choi} to see that $\Phi$ being doubly non-negative is equivalent to $J(\Phi)$ being doubly non-negative. By using trace-preservation of $\Phi$, we see that this is equivalent to $J(\Phi)$ having the form
    \begin{equation}
        J(\Phi) = \begin{bmatrix}
            a & b & 0 & c \\
            b & 1-a & d & 0 \\
            0 & d & e & f \\
            c & 0 & f & 1-e
        \end{bmatrix},
    \end{equation}
    where $a,b,c,d,e,f \geq 0$ and $a,e \leq 1$. Straightforward (but tedious and ugly) linear algebra shows that this is equivalent to the standard matrix of $\Phi$ having the form
    \begin{equation}
        [\Phi] = \begin{bmatrix}
            1 & 0 & 0 & 0 \\
            b+f & c+d & 0 & b-f \\
            0 & 0 & c-d & 0 \\
            a+e-1 & 0 & 0 & a-e
        \end{bmatrix}.
    \end{equation} 
    By making the change of variables $t_x = b+f$, $T_{x,z} = b-f$, $T_{x,x} = c+d$, $T_{y,y} = c-d$, and ${t_z = a+e-1}$, $T_{z,z} = a-e$, we see that non-negativity of $a,b,c,d,e$, and $f$ is equivalent to $t_x \geq 0$, $|T_{x,z}| \leq t_x$, $T_{x,x} \geq 0$, $|T_{y,y}| \leq T_{x,x}$, $t_z \geq -1$, and $|T_{z,z}| \leq t_z + 1$. We can discard the inequality $t_z \geq -1$ since it follows for free from complete positivity of $\Phi$ (if $t_z < -1$ then the $(1,1)$-entry of $\Phi(I)$ is $(1 + t_z)/2 < 0$), which completes the proof.
\end{proof}

If we restrict our attention slightly further to \emph{unital} CPCP qubit channels, we get the following simplification of the previous theorem.

\begin{corollary}
    Suppose $\Phi \in \Lin(M_2,M_2)$ is a unital quantum channel with standard matrix $[\Phi]$ as in Equation~\eqref{eq:qubit_map_representation}. Then $\Phi$ is CPCP if and only if it is CPDNN, if and only if
        \begin{equation}
            [\Phi] = \begin{bmatrix}
                1 & 0 & 0 & 0 \\
                0 & T_{x,x} & 0 & 0 \\
                0 & 0 & T_{y,y} & 0 \\
                0 & 0 & 0 & T_{z,z}
            \end{bmatrix}, 
        \end{equation} 
        where $T_{x,x} \geq |T_{y,y}|$. 
\end{corollary}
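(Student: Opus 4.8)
The plan is to specialize Theorem~\ref{thm:CPCP_characterize_qubit}(b) to the unital case. Since $\Phi$ is already assumed to be a quantum channel, the equivalence ``CPCP iff CPDNN'' is inherited verbatim from that theorem, so the only real work is to pin down the form of $[\Phi]$ once we add the constraint $\Phi(I) = I$.

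First I would recall that, in the Pauli representation~\eqref{eq:qubit_map_representation}, one has $\Phi(I) = I + t_x X + t_y Y + t_z Z$, so unitality of $\Phi$ is precisely the statement $t_x = t_y = t_z = 0$. Substituting this into the characterization of Theorem~\ref{thm:CPCP_characterize_qubit}(b): the condition $t_x \geq |T_{x,z}|$ becomes $0 \geq |T_{x,z}|$, forcing $T_{x,z} = 0$; the condition $t_z \geq |T_{z,z}| - 1$ becomes $|T_{z,z}| \leq 1$; and the condition $T_{x,x} \geq |T_{y,y}|$ is unchanged. Hence $[\Phi]$ is forced into the diagonal shape $\mathrm{diag}(1,\, T_{x,x},\, T_{y,y},\, T_{z,z})$ with $T_{x,x} \geq |T_{y,y}|$ and $|T_{z,z}| \leq 1$. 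Conversely, any unital quantum channel whose standard matrix has this diagonal form with $T_{x,x} \geq |T_{y,y}|$ satisfies all three inequalities of Theorem~\ref{thm:CPCP_characterize_qubit}(b) (with $t_x = t_z = T_{x,z} = 0$, and with $|T_{z,z}| \leq 1$ supplied as discussed below) and is therefore CPCP.

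The last point is to explain why the inequality $|T_{z,z}| \leq 1$ does not need to appear in the statement: it follows for free from $\Phi$ being a genuine quantum channel. Applying $\Phi$ to the free pure state $\ketbra{0}{0} = \tfrac{1}{2}(I+Z)$ gives $\Phi(\ketbra{0}{0}) = \tfrac{1}{2}(I + T_{z,z}Z)$, which is positive semidefinite only when $|T_{z,z}| \leq 1$ --- this is the same ``discard a redundant inequality'' step already used at the end of the proof of Theorem~\ref{thm:CPCP_characterize_qubit}. I do not anticipate any genuine obstacle: the corollary is a direct specialization of the preceding theorem, and the only mild subtlety is bookkeeping --- checking that both the forward and backward directions go through once unitality has been substituted in and the redundant constraint on $T_{z,z}$ has been absorbed into the standing hypothesis that $\Phi$ is a quantum channel.
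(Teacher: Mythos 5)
Your proposal is correct and follows the same route as the paper: the paper's proof is exactly the observation that unitality means $t_x = t_y = t_z = 0$ in the Pauli representation, combined with Theorem~\ref{thm:CPCP_characterize_qubit}(b). Your additional bookkeeping (deducing $T_{x,z}=0$ from $0 \geq |T_{x,z}|$ and noting that $|T_{z,z}|\leq 1$ is absorbed by complete positivity) just makes explicit what the paper leaves as ``follows immediately.''
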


\begin{proof}
    $\Phi$ being unital is equivalent to its standard matrix~\eqref{eq:qubit_map_representation} having $t_x = t_y = t_z = 0$. When combined with Theorem~\ref{thm:CPCP_characterize_qubit}, the result follows immediately.
\end{proof}

In other words, the above corollary says that every unital CPCP qubit channel $\Phi$ acts on the Pauli $X$, $Y$, and $Z$ matrices independently:
\begin{equation}
    \Phi(I + \rho_xX + \rho_yY + \rho_zZ) = I + \rho_x T_{x,x}X + \rho_y T_{y,y}Y + \rho_z T_{z,z}Z,
\end{equation} 
where $T_{x,x}, T_{y,y}, T_{z,z} \in \R$ satisfy $|T_{y,y}| \leq T_{x,x} \leq 1$ and $|T_{z,z}| \leq 1$ (with the final two inequalities being equivalent to complete positivity of $\Phi$).

This representation of a unital CPCP qubit channel of course agrees with Theorem~\ref{thm:CPCP_unital_channel}, since if we use part~(c) of that theorem, we can write
\begin{equation}
    \Phi(\rho) = pA \odot \rho + (1-p)X (B \odot \rho) X,
\end{equation} 
where $A$ and $B$ are CP with diagonal entries equal to $1$ and $p \in [0,1]$ is a scalar. 
Then by just rewriting things appropriately, we can show that 
\begin{equation}
    \Phi(I + \rho_xX + \rho_yY + \rho_zZ) = I + \rho_x\big(pa_{1,2} + (1-p)b_{1,2}\big)X + \rho_y\big(pa_{1,2} - (1-p)b_{1,2}\big)Y + \rho_z(2p - 1)Z.
\end{equation} 
In particular, we have $T_{x,x} = pa_{1,2} + (1-p)b_{1,2}$, $T_{y,y} = pa_{1,2} - (1-p)b_{1,2}$ (which satisfies ${|T_{y,y}| \leq T_{x,x}}$), and $T_{z,z} = 2p-1$.

     
\subsection{The most resourceful state} 

We now ask whether there is a ``most resourceful state'', which in this context translates into determining which state is the ``most non-non-negative''. 
Note that we did not say ``most positive'', for reasons that follow from the below discussion. 
There are a few ways to quantify this; one way is to find a state which maps to \emph{any} other state using some definition of a free operation (this is analogous to how pure quantum states with all of their Schmidt coefficients equal to each other are considered ``maximally entangled'' since they can be converted to any other quantum state via LOCC operations).     
Since we have a nice characterization of the free operations for qubit channels, we can characterize the most resourceful qubit state, below. 

\begin{lemma} \label{lem:most_resourceful}
For any qubit state $\rho$, there exists a CP-preserving quantum channel $\Phi$ such that 
\begin{equation} 
\Phi(\sigma) = \rho 
\end{equation} 
where $\sigma$ is the density matrix
\begin{equation} 
    \sigma := \frac{I + Y}{2}.
\end{equation} 
\end{lemma}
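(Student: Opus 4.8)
The plan is to construct $\Phi$ as a measure-and-prepare (entanglement-breaking) channel that destructively measures in the eigenbasis of the Pauli $Y$ operator and re-prepares $\rho$ on the outcome associated with $\sigma$. Write $\ket{\psi_\pm} = \tfrac{1}{\sqrt 2}(\ket 0 \pm i \ket 1)$ for the normalized eigenvectors of $Y$, so that $\sigma = \ketbra{\psi_+}{\psi_+}$. For a density matrix $\tau \in \D_2$ to be specified below, define $\Phi \in \Lin(M_2,M_2)$ by
\begin{equation}
    \Phi(W) = \bra{\psi_+}W\ket{\psi_+}\,\rho \;+\; \bra{\psi_-}W\ket{\psi_-}\,\tau .
\end{equation}
This is a measure-and-prepare map, hence completely positive, and it is trace-preserving because $\ketbra{\psi_+}{\psi_+} + \ketbra{\psi_-}{\psi_-} = I$ while $\rho,\tau \in \D_2$. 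Since $\braket{\psi_-}{\psi_+} = 0$ and $\braket{\psi_+}{\psi_+} = 1$, one gets $\Phi(\sigma) = \rho$ immediately, so the whole content is to verify that $\Phi$ is CP-preserving, i.e. that $\Phi(W) \in \CP_2$ whenever $W \in \CP_2$.

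The key observation is that every $W \in \CP_2 = \DNN_2$ is real and symmetric, so the imaginary cross-terms in $\bra{\psi_\pm}W\ket{\psi_\pm}$ cancel and a one-line computation yields $\bra{\psi_+}W\ket{\psi_+} = \bra{\psi_-}W\ket{\psi_-} = \tfrac12\tr(W) = \tfrac12$. Hence $\Phi(W) = \tfrac12(\rho+\tau)$ for \emph{every} $W \in \CP_2$; measuring in the $Y$-basis collapses the entire free cone onto a single point. It therefore suffices to choose $\tau$ so that $\tfrac12(\rho+\tau) \in \CP_2$. Writing $\rho = \tfrac12(I + r_xX + r_yY + r_zZ)$ with $r_x^2+r_y^2+r_z^2 \le 1$, and recalling from Section~\ref{sec:qubit_maps} that a qubit state belongs to $\CP_2$ exactly when its $Y$-coefficient is $0$ and its $X$-coefficient is non-negative, I would take
\begin{equation}
    \tau := \tfrac12\big(I + \sqrt{1-r_y^2}\,X - r_yY\big),
\end{equation}
which is a legitimate (in fact pure) density matrix, and then check that $\tfrac12(\rho+\tau) = \tfrac12\big(I + \tfrac12(r_x + \sqrt{1-r_y^2})X + \tfrac12 r_zZ\big)$ has vanishing $Y$-coefficient and $X$-coefficient $\tfrac12(r_x + \sqrt{1-r_y^2}) \ge 0$, the inequality holding because $r_x^2 \le r_x^2 + r_y^2 \le 1$ forces $\sqrt{1-r_y^2} \ge |r_x|$.

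I do not expect a genuine obstacle: the argument boils down to the fact that a $Y$-basis measurement sends every member of $\CP_2$ to the fixed mixture $\tfrac12(\rho+\tau)$ while leaving $\sigma$ itself untouched, after which only routine checks remain (that $\Phi$ is a channel, that $\Phi(\sigma)=\rho$, and that the chosen $\tau$ lies in the Bloch ball with the right coordinates). The one point deserving care is that a generic measure-and-prepare channel need \emph{not} be CP-preserving; it is here precisely because the cancellation makes both outcome probabilities equal to $\tfrac12$ on all of $\CP_2$, and this is exactly what singles out $\sigma = (I+Y)/2$ as the right candidate for the most resourceful qubit state.
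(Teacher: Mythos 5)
Your proof is correct, but it takes a genuinely different route from the paper's. The paper writes down an explicit family of channels parametrized by the Bloch vector $(a,b,c)$ of the target state, defined on matrix units by $\Phi(\ketbra{0}{0}) = \Phi(\ketbra{1}{1}) = I/2$ and $\Phi(\ketbra{0}{1}) = -\Phi(\ketbra{1}{0}) = i(aX+bY+cZ)/2$; it then checks complete positivity via a Schur-complement argument on the Choi matrix, trace preservation from the Pauli representation, and CP-preservation by appealing to Theorem~\ref{thm:CPCP_characterize_qubit}(a). You instead build a measure-and-prepare channel in the $Y$-eigenbasis, for which complete positivity and trace preservation are automatic, and you verify CP-preservation from first principles: every $W \in \CP_2 = \DNN_2$ is real symmetric with unit trace, so $\bra{\psi_\pm}W\ket{\psi_\pm} = \tfrac12\tr(W) = \tfrac12$ and the entire free set collapses to the single state $\tfrac12(\rho+\tau)$, after which your choice of $\tau$ cancels the $Y$-component and keeps the $X$-component non-negative (the inequality $\sqrt{1-r_y^2}\geq |r_x|$ is right). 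Your route is more self-contained --- it needs neither the characterization theorem nor a separate complete-positivity check --- and it isolates the conceptual reason $\sigma = (I+Y)/2$ works: a $Y$-basis measurement is ``blind'' on $\CP_2$. What the paper's construction buys in exchange is a unital channel whose Pauli matrix representation is exactly of the form analyzed in Theorem~\ref{thm:CPCP_characterize_qubit}, which the paper immediately reuses to show that $(I\pm Y)/2$ are the \emph{only} maximally resourceful qubit states; your argument proves the lemma but would need that extra analysis for the uniqueness discussion that follows. The only fact you use implicitly is the description of $\CP_2$ as the qubit density matrices with $\rho_y = 0$ and $\rho_x \geq 0$, which the paper establishes in Section~\ref{sec:qubit_maps}, so there is no gap.
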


\begin{proof} 
    Define a linear map $\Phi \in \Lin(M_2,M_2)$ which acts on the standard basis matrices in the following way:   
    \begin{align} 
        \Phi(\ketbra{0}{0}) & = \frac{I}{2} \\ 
        \Phi(\ketbra{0}{1}) & = i \, \frac{aX+bY+cZ}{2} \\ 
        \Phi(\ketbra{1}{0}) & = -i \, \frac{aX+bY+cZ}{2} \\ 
        \Phi(\ketbra{1}{1}) & = \frac{I}{2} 
    \end{align}  
    for real parameters $a,b,c$ satisfying $a^2 + b^2 + c^2 \leq 1$, and extend linearly.  
    Its Choi matrix is, in block form, 
    \begin{equation}
        \frac{1}{2} \begin{bmatrix}
            I & i(aX+bY+cZ) \\ 
            -i(aX+bY+cZ) & I
        \end{bmatrix},
    \end{equation} 
    which can be checked to be positive semidefinite using Schur complements. Thus, $\Phi$ is completely positive.  
    Representing this channel in the Pauli basis, we have 
    \begin{equation} 
    [\Phi] = \begin{bmatrix} 
    1 & 0 & 0 & 0 \\ 
    0 & 0 & a & 0 \\ 
    0 & 0 & b & 0 \\ 
    0 & 0 & c & 0 
    \end{bmatrix}
    \end{equation} 
    and thus it is also trace-preserving. 
    From Theorem~\ref{thm:CPCP_characterize_qubit}, we see that this channel is CP-preserving. 
    Since for fixed $a,b,c$, the channel satisfies
    \begin{equation} \label{eq:most_resourceful_Y}
    \Phi(\sigma) = \frac{I + aX + bY + cZ}{2},
    \end{equation} 
    and every qubit state is of this form with $a^2 + b^2 + c^2 \leq 1$, we see that we can choose $a,b,c$ such that this represents any qubit we want. 
\end{proof}

The above lemma says that the quantum state $\sigma = (I + Y)/2$ is maximally resourceful in this resource theory. In fact, the proof of the lemma shows that $\sigma$ is essentially unique---the only other maximally resourceful state is $(I-Y)/2$. To verify that $(I-Y)/2$ is also maximally resourceful, note that the only part of the proof of Lemma~\ref{lem:most_resourceful} that changes in this case is that $b$ changes to $-b$ in Equation~\eqref{eq:most_resourceful_Y}.

To see that no other states are maximally resourceful, we just note that complete positivity (in the linear map sense) forces the coefficient $T_{y,y}$ in Theorem~\ref{thm:CPCP_characterize_qubit}(a) to satisfy $|T_{y,y}| \leq 1$, since otherwise $\Phi(I \pm Y) = I + (t_x \pm T_{x,y})X \pm T_{y,y}Y + (t_z \pm T_{z,y})Z$ would not be positive semidefinite (e.g., if $\ket{v} = (1,i)^\top/\sqrt{2}$ then $\bra{v}\Phi(I \pm Y)\ket{v} = 1 \pm T_{y,y}$ is less than zero for one of the two choices of sign). Since Theorem~\ref{thm:CPCP_characterize_qubit}(a) tells us that a general qubit state $\rho = (I + \rho_x X + \rho_y Y + \rho_z Z)/2$ is such that
\begin{equation}
    \Phi(\rho) = \frac{1}{2}\big(I + (t_x + T_{x,x}\rho_x + T_{x,y}\rho_y + T_{x,z}\rho_z)X + T_{y,y}\rho_y Y + (t_z + T_{z,x}\rho_x + T_{z,y}\rho_y + T_{z,z}\rho_z)Z\big),
\end{equation}
we conclude that the only way that a qubit CP-preserving $\Phi$ can satisfy $\Phi(\rho) = (I+Y)/2$ is if $T_{y,y}\rho_y = \pm 1$, which forces $\rho_y = \pm 1$, so $\rho = (I \pm Y)/2$. 

This furthermore tells us that we cannot use CPCP quantum channels to map $\sigma = (I+Y)/2$ or any other state to arbitrary qubits. 
To see this, Theorem~\ref{thm:CPCP_characterize_qubit} says that if $\Phi$ is a CPCP quantum channel, then 
\begin{equation}  \Phi(\sigma) 
= \frac{I + t_xX + T_{y,y}Y + t_zZ}{2}, 
\end{equation} 
where the constraints on $t_x$, $T_{y,y}$, and $t_z$ are as in part (b) of that theorem. 
In particular, since $t_x \geq 0$, this tells us that we cannot get any qubits with a negative weight on the $X$ matrix, so $\sigma$ cannot be mapped to arbitrary qubits in this way (and a similar argument works for $(I-Y)/2$). Since we already showed that no other state can be mapped by CP-preserving channels to arbitrary qubits, they certainly cannot be mapped by CPCP channels to arbitrary qubits. We thus conclude that there is no maximally resourceful state in this resource theory if we consider only CPCP channels as the free operations.


\section{Measures of non-negativity}  \label{sec:measures_nonneg}

We now investigate some ways of quantifying how close to free (i.e., pure and non-negative, or mixed and completely positive) a non-free state is. That is, we define and investigate measures that are analogous in this resource theory to entanglement monotones \cite{Vid00} for the resource theory of entanglement and to coherence monotones \cite{BCP14} for the resource theory of coherence.

Throughout this section (and the remainder of this paper), we omit subscripts on sets like $\D_n$, $\CP_n$, and $\DNN_n$, and instead simply note that $n$ denotes the dimensionality of the states in question in all cases. We say that a function $N : \D \rightarrow [0,\infty]$ is a \textit{non-negativity monotone} if it satisfies the following three properties:

\begin{itemize}
    \item[(C1)] \textbf{Freeness:} $N(\rho) = 0$ whenever $\rho \in \CP$,

    \item[(C2)] \textbf{Monotonicity:} $N\big(\Phi(\rho)\big) \leq N(\rho)$ for all CP-preserving channels $\Phi$ and all $\rho \in \D$, and

    \item[(C3)] \textbf{Convexity:} $N(\sum_{i=1}^k p_i\rho_i) \leq \sum_{i=1}^k p_i N(\rho_i)$ whenever $\{ \rho_1, \ldots, \rho_k \} \subset \D$, $p_i \geq 0$ for all ${i \in \{ 1, \ldots, k \}}$, and $\sum_{i=1}^k p_i = 1$.
\end{itemize}
Optionally, either or both of the following properties may be enforced as well:
\begin{itemize}
    \item[(C1b)] \textbf{Faithfulness:} $N(\rho) = 0$ if and only if $\rho \in \CP$, and/or
    
    \item[(C2b)] \textbf{Strong monotonicity:} $\sum_{i=1}^k p_i N\big(\Phi_i(\rho)/p_i\big) \leq N(\rho)$, where $p_i = \tr\big(\Phi_i(\rho)\big)$ for all $i \in \{ 1, \ldots, k \}$, whenever each $\Phi_i$ is CP-preserving and $\sum_{i=1}^k \Phi_i$ is trace-preserving (i.e., a quantum channel).
\end{itemize}

We note that faithfulness (C1b) trivially implies freeness (C1). Similarly, strong monotonicity (C2b) trivially implies monotonicity (C2), and physically corresponds to the function $N$ being monotonic not just under the free quantum operations, but also under subchannels or measurements.
  
Since the set of CP-preserving maps is somewhat unwieldy, it is typically easier to check the monotonicity and strong monotonicity properties (C2) and (C2b) if we instead only consider CPCP channels. We note that in this case strong monotonicity (C2b) is equivalent (via Theorem~\ref{thm:CPCP_channel_kraus}) to the requirement that
\begin{equation}
    \sum_{i=1}^k p_i N\big(A_i\rho A_i^*)/p_i\big) \leq N(\rho),
\end{equation} 
where $p_i = \tr\big(A_i\rho A_i^*\big)$ for all $i \in \{ 1, \ldots, k \}$, whenever each $A_i$ is entrywise non-negative and also $\sum_{i=1}^k A_i^*A_i = I$ (and thus in particular has at most one non-zero entry in each row).
  
  
\subsection[The 1-norm of non-negativity for pure states]{The $1$-norm of non-negativity for pure states}\label{sec:l1_pure_measure}

Before investigating any proper non-negativity monotones, we first introduce and explore a non-standard vector norm that will be of use to us later. We will see that we can roughly think of this norm as providing a measure of non-negativity of pure states.

\begin{definition}\label{defn:phasey_norms_pure}
    Suppose $\v \in \C^n$. The $\mathbf{1}$\textbf{-norm of non-negativity} is the quantity 
    \begin{align}
        \|\mathbf{v}\|_1^{\textup{N}} & \defeq \inf\left\{ \sum_j \|\mathbf{v_j}\| : \mathbf{v} = \sum_j c_j\mathbf{v_j}, \ \mathbf{v_j} \geq 0, \ |c_j| = 1 \ \forall j \right\},
    \end{align}
    where the infimum is taken over all such finite decompositions of $\v$.
\end{definition}

While this quantity perhaps look quite strange at first, it is the natural analog of well-known quantities from the resource theories of coherence and entanglement. For example, if we replace the free states $\{\mathbf{v_j}\}$ in this definition by incoherent states (i.e., states with just one non-zero entry) then the resulting norm is just the usual $1$-norm $\|\mathbf{v}\|_1 = \sum_j |v_j|$. On the other hand, if we replace those free states by separable (pure) states then the resulting norm is the sum of Schmidt coefficients of $\v$.

We also note that it is straightforward to see that if $\ket{v} \in \C^n$ is a pure state (i.e., has $\|\ket{v}\| = 1$) then $\|\ket{v}\|_1^{\textup{N}} \geq 1$, and furthermore equality holds if and only if every entry of $\ket{v}$ has the same phase (i.e., if and only if $\ketbra{v}{v} \in \CP_n$). The following theorem establishes some less trivial bounds on this norm. 

\begin{theorem}\label{thm:nn_norm_bounds}
    If $\ket{v} \in \C^n$ is a pure state then $\|\ket{v}\|_1^{\textup{N}} \leq \min\{\sqrt{n},2\}$.
\end{theorem}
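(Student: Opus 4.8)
The plan is to prove the two bounds $\|\ket{v}\|_1^{\textup{N}} \le \sqrt{n}$ and $\|\ket{v}\|_1^{\textup{N}} \le 2$ separately, in each case by writing down one explicit decomposition of $\ket{v}$ of the kind permitted in Definition~\ref{defn:phasey_norms_pure} and bounding its cost $\sum_j \|\mathbf{v_j}\|$ via Cauchy--Schwarz; the minimum of the two then gives the claim in every dimension.

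For the bound $\|\ket{v}\|_1^{\textup{N}} \le \sqrt{n}$, I would simply split $\ket{v}$ along the computational basis: writing its $k$-th entry as $|v_k|e^{i\theta_k}$, set $\mathbf{v_k} = |v_k|\ket{k} \ge 0$ and $c_k = e^{i\theta_k}$, so that $\ket{v} = \sum_{k=0}^{n-1} c_k \mathbf{v_k}$ with $|c_k| = 1$. The cost of this decomposition is $\sum_k \|\,|v_k|\ket{k}\,\| = \sum_k |v_k| = \|\ket{v}\|_1$, and Cauchy--Schwarz gives $\|\ket{v}\|_1 \le \sqrt{n}\,\|\ket{v}\|_2 = \sqrt{n}$.

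For the dimension-independent bound $\|\ket{v}\|_1^{\textup{N}} \le 2$, I would instead split $\ket{v}$ into its real and imaginary parts and then each of those into entrywise positive and negative parts: write $\ket{v} = \mathbf{a} + i\mathbf{b}$ with $\mathbf{a},\mathbf{b} \in \R^n$, and $\mathbf{a} = \mathbf{a}_+ - \mathbf{a}_-$, $\mathbf{b} = \mathbf{b}_+ - \mathbf{b}_-$ where $\mathbf{a}_\pm, \mathbf{b}_\pm \ge 0$, with $\mathbf{a}_+,\mathbf{a}_-$ having disjoint supports and likewise $\mathbf{b}_+,\mathbf{b}_-$. This yields the four-term decomposition $\ket{v} = \mathbf{a}_+ - \mathbf{a}_- + i\mathbf{b}_+ - i\mathbf{b}_-$ with scalars $1,-1,i,-i$, all of modulus $1$, so its cost is $\|\mathbf{a}_+\| + \|\mathbf{a}_-\| + \|\mathbf{b}_+\| + \|\mathbf{b}_-\|$. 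Disjointness of supports gives $\|\mathbf{a}_+\|^2 + \|\mathbf{a}_-\|^2 = \|\mathbf{a}\|^2$, hence $\|\mathbf{a}_+\| + \|\mathbf{a}_-\| \le \sqrt{2}\,\|\mathbf{a}\|$ by the inequality $p+q \le \sqrt{2(p^2+q^2)}$, and similarly $\|\mathbf{b}_+\| + \|\mathbf{b}_-\| \le \sqrt{2}\,\|\mathbf{b}\|$; one more application of the same inequality bounds the cost by $\sqrt{2}\big(\|\mathbf{a}\|+\|\mathbf{b}\|\big) \le 2\sqrt{\|\mathbf{a}\|^2 + \|\mathbf{b}\|^2} = 2\|\ket{v}\|_2 = 2$.

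I do not expect a genuine obstacle here: the content is essentially two hand-picked decompositions together with Cauchy--Schwarz. The only point worth flagging is that the second, four-term decomposition is the one responsible for the dimension-independent bound of $2$, whereas the naive basis decomposition only delivers $\sqrt{n}$; both are needed in order to conclude $\|\ket{v}\|_1^{\textup{N}} \le \min\{\sqrt{n},2\}$ for all $n$.
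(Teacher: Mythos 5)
Your proposal is correct and follows essentially the same route as the paper: the $\sqrt{n}$ bound via $\|\ket{v}\|_1^{\textup{N}} \leq \|\ket{v}\|_1 \leq \sqrt{n}$, and the bound of $2$ via the same four-term decomposition into positive/negative real and imaginary parts with coefficients $1,-1,i,-i$, followed by the same two applications of $p+q \leq \sqrt{2(p^2+q^2)}$. Your write-up is in fact slightly more explicit than the paper's about why $\|\mathbf{a}_+\|+\|\mathbf{a}_-\| \leq \sqrt{2}\,\|\mathbf{a}\|$ (disjoint supports), but the argument is identical in substance.
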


\begin{proof}
    The $\sqrt{n}$ upper bound follows from the bound $\|\ket{v}\|_1^{\textup{N}} \leq \|\ket{v}\|_1 \leq \sqrt{n}$. The (dimension-independent!) upper bound of $2$ follows from the fact that we can write
    \begin{equation}
        \ket{v} = \max(\Re(\ket{v}),0) + (-1)\min(\Re(\ket{v}),0) + i\max(\Im(\ket{v}),0) + (-i)\min(\Im(\ket{v}),0),
    \end{equation} 
    where $\Re(\ket{v})$ and $\Im(\ket{v})$ are the (entry-wise) real and imaginary parts of $\ket{v}$, respectively, and each maximization and minimization is also meant entry-wise. This is a decomposition of the type from Definition~\ref{defn:phasey_norms_pure}, so
    \begin{align}\begin{split}
        \|\ket{v}\|_1^{\textup{N}} & \leq \|\max(\Re(\ket{v}),0)\| + \|\min(\Re(\ket{v}),0)\| + \|\max(\Im(\ket{v}),0)\| + \|\min(\Im(\ket{v}),0)\| \\
        & \leq \sqrt{2}\big( \|\Re(\ket{v})\| + \|\Im(\ket{v})\| \big) \\
        & \leq 2\|\ket{v}\| = 2 
    \end{split}\end{align} 
    completing the proof.    
\end{proof}

When $n = 2$, we will see shortly (by combining the upcoming Theorems~\ref{thm:rob_on_pure} and~\ref{thm:rob_qubits}) that we have the explicit formula 
\begin{equation}
\|\ket{v}\|_1^{\textup{N}} = \sqrt{2\big|\min\{\Re(v_1\overline{v_2}),0\} + i\Im(v_1\overline{v_2})\big|+1}. 
\end{equation} 
In particular, this tells us that the bound of Theorem~\ref{thm:nn_norm_bounds} can be tight when $n = 2$, since $\|\ket{v}\|_1^{\textup{N}} = \sqrt{2}$ when $\ket{v} = \tfrac{1}{\sqrt{2}}(1,-1){^\top}$.

We will furthermore see that $\|\ket{v}\|_1^{\textup{N}}$ can be computed via semidefinite programming when $n \leq 4$, and this can be quickly used to show that the bound provided by Theorem~\ref{thm:nn_norm_bounds} can be tight in all dimensions. For example,
\begin{align}
    \Big\|\tfrac{1}{\sqrt{3}}(1,-1,i){^\top}\Big\|_1^{\textup{N}} = \sqrt{3} \quad \text{and} \quad \Big\|\tfrac{1}{\sqrt{n}}(1,-1,i,-i,0,\ldots,0){^\top}\Big\|_1^{\textup{N}} & = 2 \quad \text{for all} \quad n \geq 4. 
\end{align} 

In general, it is not clear that there is a simple way to compute $\|\ket{v}\|_1^{\textup{N}}$ via standard techniques like semidefinite programming, but we can approximate it very well in practice by making use of nets. In particular, if we let $k \geq 1$ be a large integer and choose the scalars $\{c_j\}$ in Definition~\ref{defn:phasey_norms_pure} to be equally spaced around the unit circle in the complex plane, then we can find the corresponding optimal vectors $\{\mathbf{v_j}\}$ via the following semidefinite program: 
\begin{align}\label{eq:sdp_1norm_net}\begin{split}
	\text{minimize:}\quad & \|\v_{\mathbf{0}}\| + \|\v_{\mathbf{1}}\| + \cdots + \|\v_{\mathbf{k-1}}\| \\
	\text{subject to:}\quad & \mathbf{v} = \v_{\mathbf{0}} + e^{2i\pi/k}\v_{\mathbf{1}} + e^{4i\pi/k}\v_{\mathbf{2}} + \cdots + e^{2(k-1)i\pi/k}\v_{\mathbf{k-1}},\\
		& \mathbf{v_j} \geq 0 \ \text{for all} \ 0 \leq j < k.
\end{split}\end{align} 

The following theorem provides a bound on the error of this semidefinite program. 

\begin{theorem}\label{thm:sdp_1norm}
    Suppose $k$ is a positive multiple of $4$ and $\v \in \C^n$. The optimal value $\alpha_k$ of the semidefinite program~\eqref{eq:sdp_1norm_net} satisfies
    \begin{equation}
        \frac{\alpha_k}{1 + 10\sin\big(\frac{\pi}{2k}\big)} \leq \|\mathbf{v}\|_1^{\textup{N}} \leq \alpha_k.
    \end{equation} 
\end{theorem}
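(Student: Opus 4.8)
The plan is to compare the SDP value $\alpha_k$ with the true infimum $\|\v\|_1^{\textup{N}}$ from both sides. The upper bound $\|\v\|_1^{\textup{N}} \le \alpha_k$ is immediate: any feasible point of~\eqref{eq:sdp_1norm_net} is, by construction, a decomposition of the type appearing in Definition~\ref{defn:phasey_norms_pure} (the coefficients $e^{2\pi i j/k}$ have modulus~$1$ and the vectors $\v_{\mathbf j}$ are entrywise non-negative), so the infimum defining $\|\v\|_1^{\textup{N}}$ is at most the objective value of that point; taking the best such point gives $\|\v\|_1^{\textup{N}} \le \alpha_k$. The content is therefore entirely in the reverse inequality $\alpha_k \le \big(1 + 10\sin(\tfrac{\pi}{2k})\big)\|\v\|_1^{\textup{N}}$.

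For the reverse direction I would fix an arbitrary decomposition $\v = \sum_j c_j \w_{\mathbf j}$ with $\w_{\mathbf j}\ge 0$ and $|c_j|=1$, and show how to convert it into a feasible point of the SDP whose objective is at most $\big(1+10\sin(\tfrac{\pi}{2k})\big)\sum_j\|\w_{\mathbf j}\|$; taking the infimum over the original decompositions then yields the claim. The natural idea is \emph{rounding the phases to the net}: each unit scalar $c_j$ lies within angular distance $\pi/k$ of one of the $k$ roots of unity $\omega^\ell := e^{2\pi i \ell/k}$. Writing $c_j = \omega^{\ell(j)} e^{i\delta_j}$ with $|\delta_j|\le \pi/k$, we have $c_j = \omega^{\ell(j)}(\cos\delta_j + i\sin\delta_j)$. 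The term $\cos\delta_j$ is non-negative, so $\cos\delta_j\,\w_{\mathbf j}$ can be absorbed into the $\omega^{\ell(j)}$-bucket directly. The troublesome piece is $i\sin\delta_j\,\w_{\mathbf j} = \pm|\sin\delta_j|\,(i\w_{\mathbf j})$ or $\pm|\sin\delta_j|\,(-i\w_{\mathbf j})$; since $k$ is a multiple of~$4$, the scalars $\pm i$ are themselves roots of unity $\omega^{k/4}$ and $\omega^{3k/4}$, so $i\sin\delta_j\,\w_{\mathbf j}$ can be reassigned (with the non-negative vector $|\sin\delta_j|\,\w_{\mathbf j}$, up to sign which flips between the $\omega^{\pm k/4}$ buckets) into two more legitimate buckets. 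Collecting all contributions across $j$ into the $k$ buckets indexed by $\ell$, and using the triangle inequality for $\|\cdot\|$ within each bucket, gives a feasible $\{\v_{\mathbf 0},\dots,\v_{\mathbf{k-1}}\}$ with total objective at most $\sum_j(\cos\delta_j + |\sin\delta_j|)\|\w_{\mathbf j}\| \le \sum_j(1 + |\sin\delta_j|)\|\w_{\mathbf j}\| \le (1+\sin(\tfrac{\pi}{k}))\sum_j\|\w_{\mathbf j}\|$.

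This naive accounting actually produces a constant like $1+\sin(\tfrac{\pi}{k})$ rather than the stated $1 + 10\sin(\tfrac{\pi}{2k})$, so the real subtlety — and the step I expect to be the main obstacle — is getting the bound in the exact stated form, presumably by a sharper rounding. One refinement is to round $c_j$ not to the nearest root of unity but to an adjacent \emph{pair}, writing $c_j$ as a non-negative combination of two neighbouring roots of unity $\omega^{\ell}$ and $\omega^{\ell+1}$ (possible since these span a cone containing $c_j$); this replaces the angular defect $\pi/k$ by the half-spacing $\pi/(2k)$ and avoids introducing the $\pm i$ auxiliary buckets, at the cost of a slightly worse constant from the combination weights, which is where a factor like $10$ would plausibly arise after bounding $\sum$ of the combination coefficients. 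I would carry out this refined decomposition carefully, track how many "units" of each $\w_{\mathbf j}$ land in each bucket, apply the triangle inequality once per bucket, and verify that $\sum_j (\text{total weight on }\w_{\mathbf j}) \le 1 + 10\sin(\tfrac{\pi}{2k})$ uniformly in~$j$ — after which summing over $j$ and then taking the infimum over decompositions of $\v$ finishes the proof. A final sanity check: as $k\to\infty$ the error factor tends to~$1$, consistent with $\alpha_k \to \|\v\|_1^{\textup{N}}$.
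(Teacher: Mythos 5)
Your ``naive accounting'' is already a complete and correct proof --- in fact of a \emph{stronger} statement --- and the only misstep is in your last paragraph, where you conclude that a sharper rounding is needed. The stated constant $1+10\sin\big(\tfrac{\pi}{2k}\big)$ is \emph{weaker} than your $1+\sin\big(\tfrac{\pi}{k}\big)$: since $\sin\big(\tfrac{\pi}{k}\big)=2\sin\big(\tfrac{\pi}{2k}\big)\cos\big(\tfrac{\pi}{2k}\big)\leq 2\sin\big(\tfrac{\pi}{2k}\big)\leq 10\sin\big(\tfrac{\pi}{2k}\big)$, your bound immediately implies the theorem, and the proposed refinement via adjacent pairs of roots of unity is unnecessary (the factor $10$ in the paper is an artifact of its cruder bookkeeping, not something you need to reproduce). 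Your route also genuinely differs from the paper's. The paper treats $\alpha_k$ as a function of its argument and proves two facts, $\alpha_k(\v+\w)\leq\alpha_k(\v)+\alpha_k(\w)$ and $\alpha_k(c\v)\leq 2\big|c-|c|\big|\,\|\v\|+|c|\,\alpha_k(\v)$, the latter resting on the dimension-independent bound $\alpha_k(\cdot)\leq 2\|\cdot\|$ from the proof of Theorem~\ref{thm:nn_norm_bounds} (this is where $4\mid k$ enters there); it then rounds each $c_j$ to the nearest $k$-th root of unity $d_j$ with $|c_j-d_j|\leq 2\sin\big(\tfrac{\pi}{2k}\big)$, splits $\v=\w+(\v-\w)$ with $\w=\sum_j d_j\v_{\mathbf{j}}$, and bounds $\alpha_k(\v-\w)\leq 10\sin\big(\tfrac{\pi}{2k}\big)\sum_j\|\v_{\mathbf{j}}\|$. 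You instead write $c_j=\omega^{\ell(j)}(\cos\delta_j+i\sin\delta_j)$ with $|\delta_j|\leq\pi/k$ and reassign the imaginary part to the bucket at $\pm i\,\omega^{\ell(j)}$ (using $4\mid k$ to ensure these are again $k$-th roots of unity), so the cost per term is $\cos\delta_j+|\sin\delta_j|\leq 1+\sin\big(\tfrac{\pi}{k}\big)$ rather than the paper's $1+5|c_j-d_j|$. Both are phase-rounding arguments, but yours has tighter bookkeeping and yields a better error factor; just replace your closing paragraph with the one-line comparison of constants above.
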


\begin{proof}
    The inequality $\|\mathbf{v}\|_1^{\textup{N}} \leq \alpha_k$ comes from the fact that $\alpha_k$ arises from a particular decomposition of the type that we minimize over in the definition of $\|\mathbf{v}\|_1^{\textup{N}}$.
    
    For the other inequality, let $\alpha_k(\mathbf{v})$ denote the optimal value of the semidefinite program~\eqref{eq:sdp_1norm_net} when applied to the vector $\mathbf{v}$. We need two facts: (a) $\alpha_{k}(\mathbf{v} + \mathbf{w}) \leq \alpha_k(\mathbf{v}) + \alpha_k(\mathbf{w})$, which follows immediately from the triangle inequality for the usual Euclidean norm $\|\cdot\|$, and (b) $\alpha_k(c\mathbf{v}) \leq 2\big|c-|c|\big| \big\|\mathbf{v}\big\| + |c|\alpha_k(\mathbf{v})$, which we now demonstrate:
    \begin{align}
        \alpha_k(c\mathbf{v}) - |c|\alpha_k(\mathbf{v}) & \leq \alpha_k(c\mathbf{v} - |c|\mathbf{v}) \\
        & \leq 2\big\|c\mathbf{v} - |c|\mathbf{v}\big\| = 2\big|c-|c|\big| \big\|\mathbf{v}\big\|,
    \end{align}
    where the first inequality above comes from fact~(a), and the second inequality comes from the fact that $k$ is a multiple of $4$ so we can use the argument from the proof of Theorem~\ref{thm:nn_norm_bounds}. Rearranging gives $\alpha_k(c\mathbf{v}) \leq 2\big|c-|c|\big| \big\|\mathbf{v}\big\| + |c|\alpha_k(\mathbf{v})$, as desired.
    
    Now let $\varepsilon > 0$ be small and suppose that
    \begin{equation}
        \v = \sum_j c_j\mathbf{v_j}
    \end{equation} 
    is a decomposition which almost attains the infimum in Definition~\ref{defn:phasey_norms_pure}: $\sum_j \|\v_{\mathbf{j}}\| < \|\v\|_1^{\textup{N}} + \varepsilon$. Also let $d_j$ be the closest $k$-th root of unity to $c_j$ in the complex plane. Some straightforward geometry shows that the angle between $c_j$ and $d_j$ in the complex plane is no larger than $\pi/k$, so $|c_j - d_j| \leq 2\sin\big(\pi/(2k)\big)$. If we define $\w = \sum_j d_j\mathbf{v_j}$ then
    \begin{align}
        \alpha_k(\mathbf{w}) \leq \sum_j \|\mathbf{v_j}\| < \|\v\|_1^{\textup{N}} + \varepsilon,
    \end{align}
    and the fact that $\mathbf{v_j} \geq 0$ for all $j$ tells us that $\|\mathbf{v_j}\| = \|\mathbf{v_j}\|_1^{\textup{N}} = \alpha_k(\mathbf{v_j})$. This implies
    \begin{align}
        \alpha_k(\v-\w) & = \alpha_k\left(\sum_j (c_j-d_j)\mathbf{v_j}\right) \leq \sum_j \alpha_k\big((c_j-d_j)\mathbf{v_j}\big) \\
        & \leq \sum_j \Big(2\big|(c_j-d_j)-|c_j-d_j|\big| \big\|\mathbf{v_j}\big\| + |c_j-d_j|\alpha_k(\mathbf{v_j})\Big) \\
        & \leq 5\sum_j |c_j-d_j| \big\|\mathbf{v_j}\big\| \\
        & \leq 10\sum_j \sin\Big(\frac{\pi}{2k}\Big)\big\|\mathbf{v_j}\big\| < 10\sin\Big(\frac{\pi}{2k}\Big)\big(\|\mathbf{v}\|_1^{\textup{N}} + \varepsilon\big),
    \end{align}
    where the first inequality comes from property~(a) above, the second inequality comes from property~(b) above, and the third inequality comes from applying the triangle inequality to the absolute value and using the fact that $\|\mathbf{v_j}\| = \alpha_k(\mathbf{v_j})$.
    
    This inequality, together with the triangle inequality for $\alpha_k$ (i.e., property~(a) above), and letting $\varepsilon \rightarrow 0^{+}$, then shows that
    \begin{equation}
        \alpha_k(\mathbf{v}) = \alpha_k(\w + (\v-\w)) \leq \alpha_k(\mathbf{w}) + \alpha_k(\v-\w) \leq \Big(1 + 10\sin\Big(\frac{\pi}{2k}\Big)\Big)\|\mathbf{v}\|_1^{\textup{N}},
    \end{equation} 
    completing the proof.
\end{proof}

MATLAB code that implements the semidefinite program~\eqref{eq:sdp_1norm_net} and all bounds that we have seen for this norm is provided at \cite{SuppCode}, and in practice it can compute this norm to $4$ decimal places when $n = 50$ in about $10$ seconds on a standard desktop computer.


\subsection{The robustness of non-negativity}\label{sec:rob_nonneg}

We define the robustness of non-negativity as follows (in analogy with the robustnesses of entanglement \cite{VT99} and coherence \cite{NBCPJA16}):
\begin{align}
    N^{\textup{R}}_{\CP}(\rho) & \defeq \min_{\sigma\in \D}\left\{s\geq 0\,:\, \frac{\rho+s\sigma}{1+s}\in \CP\right\} \label{eq:cp_robustness}.
\end{align} 
We note that it follows from \cite[Theorems~14, 15, and 18]{Reg18} that $N^{\textup{R}}_{\CP}$ is a non-negativity monotone in the strongest possible sense: it satisfies properties (C1), (C2), and (C3) from earlier, as well as the stronger properties (C1b) and (C2b).

While this quantity can naturally be expressed as a conic optimization problem, optimizing over the set $\CP$ is NP-hard, so it is useful to be able to get explicitly computable bounds on it. For this reason, we note that duality theory for conic optimization (see \cite{BV04} for details) says that we can rewrite $N^{\textup{R}}_{\CP}(\rho)$ as the following optimization over the dual cone $\CP^\circ$:
\begin{align}\label{eq:cp_robustness_dual}
    N^{\textup{R}}_{\CP}(\rho) & = \max_{W\in \CP^\circ}\left\{-\tr(W\rho)\,:\, W \preceq I\right\}.
\end{align} 
We recall from Section~\ref{sec:copositive} that $\CP^\circ$ is the set of matrices whose entrywise real part is copositive.

This dual formulation of $N^{\textup{R}}_{\CP}$ is useful because we can use any copositive matrix (many of which are known in the literature) to get a lower bound on $N^{\textup{R}}_{\CP}(\rho)$. Furthermore, this lower bound is measurable since we can treat that copositive matrix as an observable that we measure on the state $\rho$, and the quantity $-\tr(W\rho)$ that lower bounds $N^{\textup{R}}_{\CP}(\rho)$ is simply the negative of the average value of this measurement. For example, we noted earlier that if $W_1$ is the Horn matrix~\eqref{eq:horn_like_Wx} and $\rho$ is the doubly non-negative but not completely positive density matrix~\eqref{eq:dnn_not_cp} then $\tr(W_1\rho) = -1/9$. If we set ${W = W_1/\lambda_{\textup{max}}(W_1) = W_1/(\sqrt{5}+1)}$ so that $W \preceq I$, then we see that 
\begin{equation}
N^{\textup{R}}_{\CP}(\rho) \geq -\tr(W\rho) = 1/(9\sqrt{5}+9) \approx 0.0343. 
\end{equation} 

To further help us bound $N^{\textup{R}}_{\CP}$, we also introduce the robustnesses with respect to the sets $\DNN$ of doubly non-negative and $\DD$ of entrywise non-negative diagonally dominant density matrices:
\begin{equation}
    \DD \defeq \Big\{ \ \rho \in \DNN : \rho_{j,j} \geq \sum_{i\neq j} \rho_{i,j} \ \text{for all} \ j \ \Big\}.
\end{equation} 
These sets have the desirable property that we can optimize over them via semidefinite programming, so the following variants of $N^{\textup{R}}_{\CP}$ are efficiently computable:
\begin{equation}
    N^{\textup{R}}_{\DNN}(\rho) \defeq \min_{\sigma\in \D}\left\{s\geq 0\,:\, \frac{\rho+s\sigma}{1+s}\in \DNN\right\}  \label{eq:dnn_robustness}\\ 
\end{equation}
and    
\begin{equation}    
    N^{\textup{R}}_{\DD}(\rho) \defeq \min_{\sigma\in \D}\left\{s\geq 0\,:\, \frac{\rho+s\sigma}{1+s}\in \DD\right\}.\label{eq:dd_robustness}
\end{equation} 

Furthermore, these sets provide inner and outer approximations of the set of completely positive density matrices in the sense that 
\begin{equation} 
\DD \subseteq \CP \subseteq \DNN 
\end{equation} 
(with the first inclusion being the main result of \cite{Kay87}), so it immediately follows that 
\begin{equation} 
N^{\textup{R}}_{\DNN}(\rho) \leq N^{\textup{R}}_{\CP}(\rho) \leq N^{\textup{R}}_{\DD}(\rho) 
\end{equation} 
for all $\rho \in \D$. We furthermore have equality on the left when $n \leq 4$. It is perhaps worth making it clear, however, that  $N^{\textup{R}}_{\DNN}(\rho)$ and $N^{\textup{R}}_{\CP}(\rho)$ do not typically equal each other when $n \geq 5$, even if we restrict them to pure states, as demonstrated by the following example.

\begin{example}\label{exam:fourier_column_five}
    Let $\omega = e^{2i\pi/5}$ be the primitive fifth root of unity and consider the pure state 
    \begin{equation} 
        \ket{v} = (1,\omega,\omega^2,\omega^3,\omega^4)/\sqrt{5}. 
    \end{equation} 
    We claim that
    \begin{equation}
        N^{\textup{R}}_{\DNN}(\ketbra{v}{v}) = (3+\sqrt{5})/2 \approx 2.6180 < 2.8197 \approx 14 - 5\sqrt{5} = N^{\textup{R}}_{\CP}(\ketbra{v}{v}).
    \end{equation} 
    
    This value of $N^{\textup{R}}_{\DNN}(\ketbra{v}{v})$ can be found numerically via standard semidefinite programming software like CVX \cite{CVX} and can be proved analytically via standard semidefinite programming duality techniques (see \cite{Wat18}, for example).
    
    The fact that $N^{\textup{R}}_{\CP}(\ketbra{v}{v}) \leq 14 - 5\sqrt{5}$ follows from the fact that $N^{\textup{R}}_{\DD}(\ketbra{v}{v}) = 14 - 5\sqrt{5}$ (which can again be proved via standard semidefinite programming techniques). Finally, the fact that $N^{\textup{R}}_{\CP}(\ketbra{v}{v}) \geq 14 - 5\sqrt{5}$ can be seen by letting $W = I - 5(3 - \sqrt{5})\ketbra{v}{v}$ in the dual optimization problem~\eqref{eq:cp_robustness_dual}. In particular, it is then the case that $-\tr(W\ketbra{v}{v}) = 14 - 5\sqrt{5}$, and $W$ is a feasible point of that optimization problem because $W \preceq I$ trivially and the real part of $W$ is a non-negative scalar multiple of a Horn-like copositive matrix from Equation~\eqref{eq:horn_like_Wx}: $\Re(W) = (\sqrt{5}-2)W_{(3+\sqrt{5})/2}$. It follows that $W \in \CP^\circ$ by our discussion in Section~\ref{sec:copositive}. Note that $W$ is complex---no \emph{real} member $W$ of $\CP^\circ$ attains this same objective value of $-\tr(W\ketbra{v}{v}) = 14 - 5\sqrt{5}$. 
\end{example}

It is worth emphasizing that the above example is somewhat surprising and contrasts with the robustness of entanglement, where for pure states the robustness with respect to the set of separable states coincides with the robustness with respect to the set of PPT states \cite[Appendix~B]{VT99} (and both can be computed by a simple function of that pure state's Schmidt coefficients). It thus seems natural to ask whether or not $N^{\textup{R}}_{\CP}$ simplifies in any meaningful way when applied to pure states. The following theorem shows that it can be computed in terms of the $1$-norm of non-negativity.

\begin{theorem}\label{thm:rob_on_pure}
    For all pure states $\ket{v} \in \C^n$ we have $N^{\textup{R}}_{\CP}(\ketbra{v}{v}) = \big(\|\ket{v}\|_1^{\textup{N}}\big)^2 - 1$.
\end{theorem}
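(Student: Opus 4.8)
The plan is to strip the auxiliary state out of the definition of $N^{\textup{R}}_{\CP}$ and then prove two matching inequalities, each by a Cauchy--Schwarz estimate, in the same spirit as the standard computations of the robustnesses of coherence and entanglement on pure states. First I would reformulate: writing $\tau = \ketbra{v}{v} + s\sigma$, the constraints ``$\sigma \in \D$, $s \geq 0$, $\tfrac{\ketbra{v}{v}+s\sigma}{1+s} \in \CP$'' are equivalent to ``$\tau$ is completely positive (with no trace normalization imposed), $\tau \succeq \ketbra{v}{v}$, and $\tr(\tau) = 1+s$'', since $\sigma \succeq 0 \Leftrightarrow \tau \succeq \ketbra{v}{v}$ and $\tr(\sigma)=1 \Leftrightarrow \tr(\tau)=1+s$, while conversely any completely positive $\tau \succeq \ketbra{v}{v}$ has $\tr(\tau) \geq \tr(\ketbra{v}{v}) = 1$ and produces a feasible $(\sigma,s)$. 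Hence
\begin{equation}
1 + N^{\textup{R}}_{\CP}(\ketbra{v}{v}) = \inf\Big\{ \tr(\tau) : \tau = \sum_j \w_j\w_j^*,\ \w_j \geq 0,\ \tau \succeq \ketbra{v}{v} \Big\},
\end{equation}
so it suffices to show the right-hand side equals $\big(\|\ket{v}\|_1^{\textup{N}}\big)^2$.

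To get the inequality $\leq$, I would fix $\varepsilon > 0$ and a decomposition $\ket{v} = \sum_j c_j \v_j$ with $\v_j \geq 0$, $|c_j|=1$, and $t := \sum_j \|\v_j\| < \|\ket{v}\|_1^{\textup{N}} + \varepsilon$; then, with $\u_j = \v_j/\|\v_j\|$ and $p_j = \|\v_j\|/t$ (so $\sum_j p_j = 1$ and $\ket{v} = t\sum_j p_j c_j \u_j$), the triangle inequality followed by Cauchy--Schwarz gives, for every $\ket{\xi}$,
\begin{equation}
|\braket{v}{\xi}|^2 = t^2\Big|\sum_j p_j c_j \braket{\u_j}{\xi}\Big|^2 \leq t^2\Big(\sum_j p_j |\braket{\u_j}{\xi}|\Big)^2 \leq t^2 \sum_j p_j |\braket{\u_j}{\xi}|^2,
\end{equation}
i.e.\ $\ketbra{v}{v} \preceq t^2 \sum_j p_j \u_j\u_j^* =: \tau$. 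Since each $\u_j \geq 0$, this $\tau$ is completely positive with $\tr(\tau) = t^2 \sum_j p_j = t^2$, and letting $\varepsilon \to 0^+$ yields $\inf\{\tr(\tau):\dots\} \leq \big(\|\ket{v}\|_1^{\textup{N}}\big)^2$.

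For the reverse inequality the idea is to invert this construction. Given any completely positive $\tau = \sum_{j=1}^m \w_j\w_j^*$ with $\w_j \geq 0$ (a finite such decomposition exists by Carath\'{e}odory) satisfying $\tau \succeq \ketbra{v}{v}$, I would set $B = [\,\w_1\,|\,\cdots\,|\,\w_m\,]$, so $\tau = BB^*$ and $\tr(\tau) = \sum_j \|\w_j\|^2$. The condition $\tau \succeq \ketbra{v}{v}$ is equivalent, via a Schur complement, to
\begin{equation}
    \begin{bmatrix} BB^* & \ket{v} \\ \bra{v} & 1 \end{bmatrix} \succeq 0,
\end{equation}
which holds if and only if there exists $\ket{w'} \in \C^m$ with $B\ket{w'} = \ket{v}$ and $\|\ket{w'}\| \leq 1$ (the minimum-norm solution being $\ket{w'} = B^{+}\ket{v}$, for which $\|\ket{w'}\|^2 = \bra{v}\tau^{+}\ket{v} \leq 1$). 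Writing $w'_j = c_j|w'_j|$ with $|c_j| = 1$, this produces the decomposition $\ket{v} = \sum_j c_j\big(|w'_j|\,\w_j\big)$ into entrywise non-negative vectors $|w'_j|\w_j$, so that
\begin{equation}
\|\ket{v}\|_1^{\textup{N}} \leq \sum_j |w'_j|\,\|\w_j\| \leq \sqrt{\sum_j |w'_j|^2}\;\sqrt{\sum_j \|\w_j\|^2} \leq \sqrt{\tr(\tau)}
\end{equation}
by Cauchy--Schwarz and $\|\ket{w'}\| \leq 1$; taking the infimum over $\tau$ closes the loop.

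The step I expect to be the main obstacle is the lower bound, specifically getting the operator-theoretic bookkeeping right: one needs the Schur-complement characterization of $\tau \succeq \ketbra{v}{v}$ as the existence of a norm-$\leq 1$ preimage $\ket{w'}$ of $\ket{v}$ under a factor $B$ with $BB^* = \tau$, and one must handle rank-deficient $\tau$ carefully---either by working on $\range(\tau)$ with the pseudoinverse as above, or by the perturbation $\tau \mapsto \tau + \delta I$ (which stays completely positive since $I \in \CP$) followed by $\delta \to 0^+$. The $\leq$ direction and the reformulation are comparatively routine, the only thing to watch there being that completely positive matrices must here be allowed arbitrary trace rather than being normalized.
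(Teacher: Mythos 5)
Your proof is correct, but it takes a genuinely different route from the paper's: the paper disposes of this theorem in one line by citing a general result \cite[Theorem~10]{Reg18}, which computes the generalized robustness of any pure state in a resource theory whose free pure states form a prescribed set $\mathcal{V}$, specialized here to $\mathcal{V}$ being the set of entrywise non-negative unit vectors. What you have written is in effect a self-contained, elementary proof of the special case of that result that is actually needed. Your conic reformulation $1 + N^{\textup{R}}_{\CP}(\ketbra{v}{v}) = \inf\{\tr(\tau) : \tau \text{ in the cone of completely positive matrices},\ \tau \succeq \ketbra{v}{v}\}$ is handled correctly (including the degenerate case $s=0$, where $\tau \succeq \ketbra{v}{v}$ and $\tr(\tau)=1$ force $\tau = \ketbra{v}{v}$); the upper bound via $\ketbra{v}{v} \preceq t^2\sum_j p_j \u_j\u_j^*$ and the lower bound via the Schur-complement characterization of $\tau \succeq \ketbra{v}{v}$ as the existence of a preimage $\ket{w'}$ of $\ket{v}$ under $B$ with $\|\ket{w'}\| \leq 1$, followed by Cauchy--Schwarz, are both sound, and the pseudoinverse bookkeeping $\|B^{+}\ket{v}\|^2 = \bra{v}\tau^{+}\ket{v} \leq 1$ already handles rank-deficient $\tau$, so the $\delta I$ perturbation you flag as a fallback is not needed. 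What your approach buys is independence from the cited machinery and an explicit two-way correspondence between near-optimal robustness witnesses $\tau$ and near-optimal decompositions in the definition of $\|\cdot\|_1^{\textup{N}}$; what the paper's approach buys is brevity and the observation that nothing here is special to complete positivity --- the identity holds verbatim for any resource theory generated by a closed set of free pure states.
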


\begin{proof}
    This result follows immediately from \cite[Theorem~10]{Reg18}. In particular, we choose (in the notation of that paper) $\mathcal{V}$ to be the set of all pure states whose entries have equal phases as each other.
\end{proof}
  
In particular, the above result, together with convexity (property (C3)) of $N_{CP}^R$, immediately implies $N^{\textup{R}}_{\CP}(\rho) \leq \min\{n-1,3\}$ for all $\rho \in \D_n$, and this bound is tight in all dimensions for the exact same reason that the bound of Theorem~\ref{thm:nn_norm_bounds} is tight. The dimension-independence of this bound also contrasts with the robustnesses of coherence and entanglement, which can become arbitrarily large as the dimension $n$ increases.

The following theorem shows that the robustness of non-negativity simplifies even further, right down to an explicit formula, when applied to qubits. 

\begin{theorem}\label{thm:rob_qubits}
    If $\rho \in \D_2$ then $N^{\textup{R}}_{\DNN}(\rho) = N^{\textup{R}}_{\CP}(\rho) = N^{\textup{R}}_{\DD}(\rho) = 2\big|\min\{\Re(\rho_{1,2}),0\} + i\Im(\rho_{1,2})\big|$.
\end{theorem}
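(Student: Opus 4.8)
The plan is to push everything through the qubit Bloch-ball picture from Section~\ref{sec:qubit_maps}. Since $n = 2 \le 4$ we have $\CP_2 = \DNN_2$~\cite{GW80}, so $N^{\textup{R}}_{\DNN}(\rho) = N^{\textup{R}}_{\CP}(\rho)$ for free, and since $\DD_2 \subseteq \CP_2$ we already know $N^{\textup{R}}_{\CP}(\rho) \le N^{\textup{R}}_{\DD}(\rho)$. So it suffices to (i) compute $N^{\textup{R}}_{\CP}(\rho)$ exactly and check that it equals the claimed expression, and (ii) produce, for the value $s = N^{\textup{R}}_{\CP}(\rho)$, a density matrix $\sigma$ with $(\rho + s\sigma)/(1+s) \in \DD_2$, which gives the matching bound $N^{\textup{R}}_{\DD}(\rho) \le N^{\textup{R}}_{\CP}(\rho)$ and collapses the chain of inequalities.

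For step~(i), write $\rho = \tfrac12(I + \rho_x X + \rho_y Y + \rho_z Z)$ and the candidate $\sigma = \tfrac12(I + \sigma_x X + \sigma_y Y + \sigma_z Z)$, with $\sigma_x^2 + \sigma_y^2 + \sigma_z^2 \le 1$. Recall from Section~\ref{sec:qubit_maps} that $\CP_2$ is exactly the wedge $\{\rho_y = 0,\ \rho_x \ge 0\}$, and that $\rho_{1,2} = \tfrac12(\rho_x - i\rho_y)$, so the right-hand side of the theorem equals $2\big|\min\{\tfrac12\rho_x,0\} - \tfrac{i}{2}\rho_y\big| = \sqrt{\min\{\rho_x,0\}^2 + \rho_y^2}$. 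Now $(\rho + s\sigma)/(1+s)$ has $X$- and $Y$-coordinates $\tfrac{\rho_x + s\sigma_x}{1+s}$ and $\tfrac{\rho_y + s\sigma_y}{1+s}$, so it lies in $\CP_2$ precisely when $\rho_y + s\sigma_y = 0$ and $\rho_x + s\sigma_x \ge 0$. The first equation forces $\sigma_y = -\rho_y/s$, hence $s \ge |\rho_y|$; given that, it is optimal to set $\sigma_z = 0$ and spend the remaining length budget on the $X$-axis, $\sigma_x = \sqrt{1 - \rho_y^2/s^2}$. The inequality $\rho_x + s\sqrt{1 - \rho_y^2/s^2} \ge 0$ is automatic when $\rho_x \ge 0$, and when $\rho_x < 0$ it squares to $s^2 \ge \rho_x^2 + \rho_y^2$. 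The smallest admissible $s$ is therefore the larger of $|\rho_y|$ and (when $\rho_x<0$) $\sqrt{\rho_x^2 + \rho_y^2}$, which is exactly $\sqrt{\min\{\rho_x,0\}^2 + \rho_y^2}$; checking that this value is actually attained (the optimal $\sigma$ above is itself a valid state) finishes step~(i).

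For step~(ii), a short computation shows that $\DD_2$ is the triangle $\{\rho_y = 0,\ 0 \le \rho_x \le 1 - |\rho_z|\}$ sitting inside the $\CP_2$ wedge. The natural approach is to re-run the optimization of step~(i) with this extra linear constraint, exploiting the $z$-direction of $\sigma$ that went unused there: keep $\sigma_y = -\rho_y/s$ and $\rho_x + s\sigma_x \ge 0$, but now choose $\sigma_x,\sigma_z$ (within the leftover budget $\sigma_x^2 + \sigma_z^2 \le 1 - \rho_y^2/s^2$) so that the perturbed Bloch vector, whose $X$- and $Z$-coordinates are $\tfrac{\rho_x + s\sigma_x}{1+s}$ and $\tfrac{\rho_z + s\sigma_z}{1+s}$, also obeys $\rho_x + s\sigma_x \le (1+s) - |\rho_z + s\sigma_z|$, at the same value of $s$ found above. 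I expect this last verification to be the main obstacle of the proof: one has to check, via a case analysis on the signs of $\rho_x$ and $\rho_z$, that forcing the $Y$-coordinate to zero and the $X$-coordinate to be non-negative never conflicts with pulling the point below the line $\rho_x = 1 - |\rho_z|$, so that membership in the smaller set $\DD_2$ costs no additional robustness. Granting this, the sandwich $N^{\textup{R}}_{\DNN}(\rho) = N^{\textup{R}}_{\CP}(\rho) \le N^{\textup{R}}_{\DD}(\rho) \le N^{\textup{R}}_{\CP}(\rho)$ forces equality throughout, with common value $\sqrt{\min\{\rho_x,0\}^2 + \rho_y^2} = 2\big|\min\{\Re(\rho_{1,2}),0\} + i\Im(\rho_{1,2})\big|$.
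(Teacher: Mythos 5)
Your step~(i) is correct and complete: working in the Bloch picture, where $\CP_2 = \DNN_2$ is the wedge $\{\rho_y = 0,\ \rho_x \ge 0\}$, you reduce the robustness to a one-parameter problem and correctly obtain $N^{\textup{R}}_{\CP}(\rho) = N^{\textup{R}}_{\DNN}(\rho) = \sqrt{\min\{\rho_x,0\}^2+\rho_y^2} = 2\big|\min\{\Re(\rho_{1,2}),0\}+i\Im(\rho_{1,2})\big|$. This is a genuinely different route from the paper's, which never passes to Bloch coordinates: there the optimal $s\sigma$ is written down directly as a $2\times 2$ matrix whose off-diagonal cancels the ``bad part'' of $\rho_{1,2}$, and optimality is argued by an entrywise extremality observation rather than by your explicit parametrization of all feasible $\sigma$. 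Your version makes the optimization transparent at the cost of a small sign case analysis on $\rho_x$; either route establishes the first two equalities and the formula.

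The gap is exactly where you flagged it, and it cannot be closed: the inequality $N^{\textup{R}}_{\DD}(\rho) \le N^{\textup{R}}_{\CP}(\rho)$ that step~(ii) is supposed to deliver is false. Take $\ket{v} = \big(\sqrt{1/10},\sqrt{9/10}\big)^{\top}$, so that $\rho = \ketbra{v}{v}$ has Bloch vector $(\rho_x,\rho_y,\rho_z)=(3/5,0,-4/5)$. This $\rho$ is entrywise non-negative and pure, hence lies in $\CP_2$ with $N^{\textup{R}}_{\CP}(\rho)=0$, but it violates your own description of $\DD_2$ as the triangle $\{\rho_y=0,\ 0\le\rho_x\le 1-|\rho_z|\}$ (here $\rho_x = 3/5 > 1/5 = 1-|\rho_z|$), so $N^{\textup{R}}_{\DD}(\rho)>0$. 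In other words, forcing the $Y$-coordinate to zero and the $X$-coordinate non-negative genuinely can conflict with pulling the point under the line $\rho_x = 1-|\rho_z|$, and membership in $\DD_2$ does cost additional robustness. (The paper's own proof of its left inequality has the same defect: the matrix $(\rho+s\sigma)/(1+s)$ it exhibits need not be diagonally dominant when a diagonal entry of $\rho$ is smaller than $\Re(\rho_{1,2})$.) The correct content of the theorem is the part your step~(i) proves, namely $N^{\textup{R}}_{\DNN}(\rho)=N^{\textup{R}}_{\CP}(\rho)=2\big|\min\{\Re(\rho_{1,2}),0\}+i\Im(\rho_{1,2})\big|$; the $N^{\textup{R}}_{\DD}$ equality should be dropped or weakened, not proved.
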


\begin{proof}
    Since we have $N^{\textup{R}}_{\DNN}(\rho) \leq N^{\textup{R}}_{\CP}(\rho) \leq N^{\textup{R}}_{\DD}(\rho)$ for all $\rho \in \D$, it suffices to show that \begin{equation} 
    N^{\textup{R}}_{\DD}(\rho) \leq 2\big|\min\{\Re(\rho_{1,2}),0\} + i\Im(\rho_{1,2})\big| \leq N^{\textup{R}}_{\DNN}(\rho). 
    \end{equation} 
    
    For the left inequality, we simply note that we can choose
    \begin{align}\label{eq:sigma_rob_qub_proof}
        s\sigma = \begin{bmatrix}
            \big|\min\{\Re(\rho_{1,2}),0\} + i\Im(\rho_{1,2})\big| & \min\{\Re(\rho_{1,2}),0\} + i\Im(\rho_{1,2}) \\
            \min\{\Re(\rho_{1,2}),0\} - i\Im(\rho_{1,2}) & \big|\min\{\Re(\rho_{1,2}),0\} + i\Im(\rho_{1,2})\big|
        \end{bmatrix}
    \end{align}
    so that the $(1,2)$ and $(2,1)$-entries of $\rho + s\sigma$ both equal $\max\big\{\Re(\rho_{1,2}),0\big\}$. Then ${(\rho + s\sigma)/(1+s) \in \DD_2}$ and $s = \tr(s\sigma) = 2\big|\min\{\Re(\rho_{1,2}),0\} + i\Im(\rho_{1,2})\big|$, so 
    \begin{equation}
    N^{\textup{R}}_{\DD}(\rho) \leq 2\big|\min\{\Re(\rho_{1,2}),0\} + i\Im(\rho_{1,2})\big|. 
    \end{equation} 
    
    For the right inequality, we just note that it is clear that the $(1,2)$-entry of the $s\sigma$ in Equation~\eqref{eq:sigma_rob_qub_proof} is as small as possible (in absolute value) subject to the constraint that $(\rho + s\sigma)/(1+s) \in \DNN_2$, and if we fix the $(1,2)$-entry of a positive semidefinite matrix then its trace is minimized when its diagonal entries are both equal to the absolute value of that $(1,2)$-entry. It follows that 
    \begin{equation}
    N^{\textup{R}}_{\DNN}(\rho) \geq 2\big|\min\{\Re(\rho_{1,2}),0\} + i\Im(\rho_{1,2})\big|, 
    \end{equation} 
    which completes the proof.
\end{proof}

The above theorem perhaps suggests defining an easy-to-compute measure of non-negativity $N^{\ell_1}$ via
\begin{equation}
    N^{\ell_1}(\rho) \defeq \sum_{i,j=1}^n \big|\min\{\Re(\rho_{i,j}),0\} + i\Im(\rho_{i,j})\big|,
\end{equation} 
in analogy with the $\ell_1$-norm of coherence \cite{BCP14}. While this measure equals the robustness of non-negativity when $n = 2$, it is not faithful (C1b) when $n \geq 5$, and it is not monotone (C2) even just under CPCP channels when $n \geq 3$. To see why, consider the channel $\Phi(X) = A_1XA_1^* + A_2XA_2^*$ and state $\rho \in \D_3$ given by 
\begin{equation} 
    A_1 = \begin{bmatrix}
        1/\sqrt{2} & 0 & 0 \\
        1/\sqrt{2} & 0 & 0 \\
        0 & 1 & 0
    \end{bmatrix}, \quad A_2 = \begin{bmatrix}
        0 & 0 & 0 \\
        0 & 0 & 0 \\
        0 & 0 & 1
    \end{bmatrix}, \quad \text{and} \quad \rho = \frac{1}{2}\begin{bmatrix}
        1 & -1 & 0 \\
        -1 & 1 & 0 \\
        0 & 0 & 0
    \end{bmatrix}.
\end{equation} 
It is straightforward to verify that $\Phi$ is a CPCP channel (after all, its Kraus operators are entrywise non-negative), but $N^{\ell_1}(\Phi(\rho)) = \sqrt{2} > 1 = N^{\ell_1}(\rho)$.


\subsection{The trace distance of non-negativity}\label{sec:trace_dist_non}

We now introduce a somewhat more geometrically-motivated measure of non-negativity, which asks how close the given density matrix is to the set $\CP_n$ of free density matrices. We define the \textit{trace distance of non-negativity} (in analogy with the trace distances of entanglement \cite{EAP03} and coherence \cite{RPL16}) by
\begin{align}\label{eq:trace_dist_nonneg}
    N_{\CP}^{\textup{tr}}(\rho) \defeq \min_{\sigma \in \CP} \big\{ \| \rho - \sigma \|_{\textup{tr}} \big\},
\end{align}
where $\| \rho - \sigma \|_{\textup{tr}}$ is the trace norm of $\rho - \sigma$ (i.e., the sum of the singular values of $\rho - \sigma$).

The fact that $N_{\CP}^{\textup{tr}}$ satisfies properties freeness (C1), faithfulness (C1b), and convexity (C3) are all straightforward to see. To see that it is montonic (C2), we just note that if $\Phi$ is a CP-preserving quantum channel and $\tilde{\sigma} \in \CP$ attains the minimum in Equation~\eqref{eq:trace_dist_nonneg} then
\begin{equation} 
        N_{\CP}^{\textup{tr}}\big(\Phi(\rho)\big) = \min_{\sigma \in \CP} \big\{ \| \Phi(\rho) - \sigma \|_{\textup{tr}} \big\} \leq \|\Phi(\rho) - \Phi(\tilde{\sigma})\|_{\textup{tr}} \leq \|\rho - \tilde{\sigma}\|_{\textup{tr}} = N_{\CP}^{\textup{tr}}(\rho),
\end{equation} 
with the second inequality coming from the fact that quantum channels cannot increase the trace norm.

The only remaining property of $N_{\CP}^{\textup{tr}}$ that remains to be determined is strong monotonicity (C2b). While we do not have an explicit counter-example to this property, it seems unlikely to hold, as the trace distances of coherence and entanglement are known to not be strongly monotonic \cite{YZXT16}. However, the following modification of $N_{\CP}^{\textup{tr}}$ where we instead consider the closest \emph{unnormalized} completely positive matrix to $\rho$ is indeed strongly monotonic (and also satisfies properties (C1b) and (C3) for the same reasons that $N_{\CP}^{\textup{tr}}$ does). Indeed, this was shown in \cite{Reg18}, where the upcoming quantity $N_{\lambda\CP}^{\textup{tr}}(\rho)$ that we introduce equals the quantity that they called $T^\prime_{S^{+}}(\rho)$, in the special case when $S = \mathcal{CP}$:
\begin{equation}
    N_{\lambda\CP}^{\textup{tr}}(\rho) \defeq \min_{\sigma \in \CP, \lambda \geq 0} \big\{ \| \rho - \lambda\sigma \|_{\textup{tr}} \big\}.
\end{equation} 

We now show that the trace distance of non-negativity and its modification both satisfy the same formula as the robustness of non-negativity when restricted to the $2$-dimensional case of qubits (refer back to Theorem~\ref{thm:rob_qubits}).   

\begin{theorem}\label{thm:trace_dist_qubits}
    If $\rho \in \D_2$ then $N_{\CP}^{\textup{tr}}(\rho) = N_{\lambda\CP}^{\textup{tr}}(\rho) = 2\big|\min\{\Re(\rho_{1,2}),0\} + i\Im(\rho_{1,2})\big|$.
\end{theorem}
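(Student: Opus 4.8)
The plan is to prove the two-sided bound
$$2\big|\min\{\Re(\rho_{1,2}),0\} + i\Im(\rho_{1,2})\big| \leq N_{\lambda\CP}^{\textup{tr}}(\rho) \leq N_{\CP}^{\textup{tr}}(\rho) \leq 2\big|\min\{\Re(\rho_{1,2}),0\} + i\Im(\rho_{1,2})\big|,$$
where the middle inequality $N_{\lambda\CP}^{\textup{tr}}(\rho) \leq N_{\CP}^{\textup{tr}}(\rho)$ is immediate since the minimization defining $N_{\lambda\CP}^{\textup{tr}}$ allows the extra scaling parameter $\lambda$ (taking $\lambda = 1$ recovers the $N_{\CP}^{\textup{tr}}$ feasible set). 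So there are really two things to show: an upper bound on $N_{\CP}^{\textup{tr}}(\rho)$ and a lower bound on $N_{\lambda\CP}^{\textup{tr}}(\rho)$.

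For the upper bound, I would exhibit an explicit $\sigma \in \CP_2$ that is close to $\rho$ in trace norm. The natural candidate is the state obtained from $\rho$ by replacing its off-diagonal entry $\rho_{1,2}$ with $\max\{\Re(\rho_{1,2}),0\}$ and leaving the diagonal alone --- this is precisely the $\rho + s\sigma$ construction (up to normalization) from the proof of Theorem~\ref{thm:rob_qubits}, and it lies in $\CP_2 = \DNN_2$ because its entries are real and non-negative and it is still PSD (for a $2\times 2$ entrywise-nonnegative matrix, shrinking the off-diagonal magnitude preserves positive semidefiniteness). Then $\rho - \sigma$ is the rank-$\le 2$ Hermitian matrix with zero diagonal and off-diagonal entry $\rho_{1,2} - \max\{\Re(\rho_{1,2}),0\} = \min\{\Re(\rho_{1,2}),0\} + i\Im(\rho_{1,2})$; an anti-diagonal $2\times 2$ Hermitian matrix $\begin{bmatrix} 0 & z \\ \bar z & 0\end{bmatrix}$ has singular values $|z|,|z|$, so its trace norm is $2|z|$, giving exactly the claimed value. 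One caveat: the normalization $(\rho + s\sigma)/(1+s)$ used in the robustness proof rescales; here I want an actual state $\sigma$, so I should double-check that $\sigma$ as I've defined it has trace $1$ --- it does, since replacing an off-diagonal entry leaves the trace unchanged --- so no rescaling is needed and this candidate is directly admissible.

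For the lower bound on $N_{\lambda\CP}^{\textup{tr}}(\rho)$, I would argue that for \emph{any} $\lambda \ge 0$ and any $\sigma \in \CP_2$, the matrix $\lambda\sigma$ is entrywise real and non-negative (that is the defining feature of $\CP_2 = \DNN_2$), so its $(1,2)$-entry $(\lambda\sigma)_{1,2}$ is real and $\ge 0$. Hence the $(1,2)$-entry of $\rho - \lambda\sigma$ has the form $\rho_{1,2} - r$ with $r \ge 0$ real, and $|\rho_{1,2} - r| \ge |\min\{\Re(\rho_{1,2}),0\} + i\Im(\rho_{1,2})|$ because subtracting a non-negative real from $\Re(\rho_{1,2})$ can only make the real part more negative (or, if $\Re(\rho_{1,2})\le 0$ already, leaves $|{\cdot}|$ at least as large) while the imaginary part is untouched --- a one-line verification. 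Then I use the fact that the trace norm of a Hermitian matrix is at least twice the absolute value of any single off-diagonal entry; the cleanest way to see this is that for a Hermitian $H$, $\|H\|_{\textup{tr}} = \max\{|\tr(HU)| : U \text{ unitary}\}$, and choosing $U$ supported on the $\{1,2\}$ block as the appropriate phase times the swap of coordinates $1$ and $2$ picks out $2|H_{1,2}|$. Combining, $\|\rho - \lambda\sigma\|_{\textup{tr}} \ge 2|(\rho-\lambda\sigma)_{1,2}| \ge 2|\min\{\Re(\rho_{1,2}),0\} + i\Im(\rho_{1,2})|$, and taking the minimum over $\sigma,\lambda$ gives the bound.

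The main obstacle --- really the only non-formal point --- is justifying the inequality ``trace norm $\ge$ twice any off-diagonal entry'' cleanly, and making sure the off-diagonal comparison $|\rho_{1,2} - r| \ge |\min\{\Re(\rho_{1,2}),0\} + i\Im(\rho_{1,2})|$ is stated with the right handling of the two cases $\Re(\rho_{1,2}) \ge 0$ and $\Re(\rho_{1,2}) < 0$. Both are elementary, so I expect the proof to be short; the structural content is entirely that $\CP_2 = \DNN_2$ forces the off-diagonal of any (scaled) free state to be real and non-negative, exactly as in Theorem~\ref{thm:rob_qubits}.
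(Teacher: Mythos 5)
Your proposal is correct and follows essentially the same route as the paper's proof: the same explicit choice of $\sigma$ (replace the off-diagonal entry by $\max\{\Re(\rho_{1,2}),0\}$) for the upper bound, and the same observation that any $\lambda\sigma \in \lambda\,\CP_2$ has a real non-negative $(1,2)$-entry, forcing the off-diagonal of $\rho-\lambda\sigma$ to have modulus at least $\big|\min\{\Re(\rho_{1,2}),0\}+i\Im(\rho_{1,2})\big|$, for the lower bound. Your justification of the step ``$\|H\|_{\textup{tr}} \geq 2|H_{1,2}|$'' via the duality formula $\|H\|_{\textup{tr}} = \max_U |\tr(HU)|$ is a slightly cleaner way to close the argument than the paper's assertion about when the trace norm is minimized, but it is the same proof in substance.
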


\begin{proof}
    It is clear that $N_{\CP}^{\textup{tr}}(\rho) \geq N_{\lambda\CP}^{\textup{tr}}(\rho)$ in all dimensions, so it suffices to prove that 
    \begin{equation} 
    N_{\lambda\CP}^{\textup{tr}}(\rho) \geq 2\big|\min\{\Re(\rho_{1,2}),0\} + i\Im(\rho_{1,2})\big| \geq N_{\CP}^{\textup{tr}}(\rho). 
    \end{equation} 
    
    For the right inequality, we simply note that we can choose
    \begin{equation}
        \sigma = \begin{bmatrix}
            \rho_{1,1} & \max\big\{\Re(\rho_{1,2}),0\big\} \\
            \max\big\{\Re(\rho_{1,2}),0\big\} & \rho_{2,2}
        \end{bmatrix}
    \end{equation} 
    so that
    \begin{align}\label{eq:trace_norm_qubit_proof}
        \rho - \sigma = \begin{bmatrix}
            0 & \min\{\Re(\rho_{1,2}),0\} + i\Im(\rho_{1,2}) \\
            \min\{\Re(\rho_{1,2}),0\} - i\Im(\rho_{1,2}) & 0
        \end{bmatrix},
    \end{align}
    which has 
    \begin{equation} 
    N_{\CP}^{\textup{tr}}(\rho) \leq \|\rho - \sigma\|_{\textup{tr}} = 2\big|\min\{\Re(\rho_{1,2}),0\} + i\Im(\rho_{1,2})\big|. 
    \end{equation} 
    
    For the left inequality, we just note that it is clear that the $(1,2)$-entry of $\rho-\lambda\sigma$ from Equation~\eqref{eq:trace_norm_qubit_proof} (with $\lambda = 1$) is as small as possible (in absolute value) subject to the constraint that $\lambda\sigma \in \CP_2$, and if we fix the $(1,2)$-entry of a $2 \times 2$ matrix then its trace norm is minimized when its diagonal entries are both equal to each other and smaller in absolute value than that of the $(1,2)$-entry. It follows that 
    \begin{equation}
    N_{\lambda\CP}^{\textup{tr}}(\rho) \geq 2\big|\min\{\Re(\rho_{1,2}),0\} + i\Im(\rho_{1,2})\big|, 
    \end{equation} 
    which completes the proof.
\end{proof}

As with the robustness of non-negativity, we can get efficiently-computable upper and lower bounds on $N_{\CP}^{\textup{tr}}(\rho)$ and $N_{\lambda\CP}^{\textup{tr}}(\rho)$ by instead minimizing the trace distance to the sets $\mathcal{DDN}$ and $\DD$. Alternatively, the semidefinite programming hierarchy of \cite{Par00}, for example, can be used to construct semidefinite programs that compute any of these measures to as much accuracy as we like (though the size of those semidefinite programs grows quickly with the desired accuracy). 


\section{Relationship with coherence}\label{sec:coherence}

This resource theory is analogous to the resource theory of coherence \cite{BCP14} in many ways. In that resource theory, the free states are those that are ``incoherent'', which simply means that they are diagonal when represented in the computational basis. We denote this set of states by $\mathcal{I}$, and we note that it is trivially the case that $\mathcal{I} \subset \CP$. That is, every density matrix that is free in the resource theory of coherence is necessarily free in this resource theory of non-negativity as well. For this reason, we can think of the present resource as a sub-resource of coherence.

As a consequence of the inclusion $\mathcal{I} \subset \CP$, most of the properties of the resource theory of non-negativity are naturally bounded by an analogous property of the resource theory of coherence. For example, the measures of non-negativity that we introduced in the previous section all have analogous measures of coherence that are defined simply with the set $\CP$ replaced by $\mathcal{I}$. In particular, the robustness of coherence $C^{\textup{R}}$ \cite{NBCPJA16}, trace distance of coherence $C_{\textup{tr}}$ \cite{RPL16}, modified trace distance of coherence $C_{\textup{tr}}^\prime$ \cite{YZXT16}, and $\ell_1$-norm of coherence $C^{\ell_1}$ \cite{BCP14} satisfy the (trivial) bounds
\begin{align}
    N^{\textup{R}}_{\CP}(\rho) & \leq C^{\textup{R}}(\rho) & N^{\textup{tr}}_{\CP}(\rho) & \leq C_{\textup{tr}}(\rho) \\
    N^{\ell_1}(\rho) & \leq C^{\ell_1}(\rho) & N^{\textup{tr}}_{\lambda\CP}(\rho) & \leq C_{\textup{tr}}^\prime(\rho).
\end{align} 

These coherence measures have the advantage of being efficiently computable by semidefinite programming, as well as having numerous theoretic results known about them (see \cite{PCB16,CGJ16,JLP18,CF18} and the references therein, for example), so all of these results immediately provide bounds on the corresponding quantities concerning non-negativity.


\section{Conclusions and open questions} \label{sec:conclusions}

In this work, we introduce a resource theory for non-negativity of amplitudes of quantum states, motivated by the Sign Problem and stoquastic Hamiltonians. We showed that the free states in this resource theory are the well-studied completely positive matrices from linear algebra and convex optimization, and we characterized the accompanying witnesses and free operations.

We also introduced numerous ways of measuring how resourceful a quantum state is in this resource theory. Most of these measures are difficult to compute, so we also proved numerous bounds, and presented a method of approximating these measures via semidefinite programming.

Our work leaves numerous questions unanswered, and opens the door to many possible directions of future research, including:

\begin{itemize}
    \item The set of doubly non-negative density matrices that are not completely positive are mathematically directly analogous to the set of entangled density matrices with positive partial transpose (see \cite{Yu16,TAQLS17} for a way of making this relationship explicit). Since PPT states are bound entangled, it seems natural to guess that DNN-but-not-CP states are ``bound'' in some sense for this resource theory as well, and this seems worth exploring.
    
    \item Is there a nice operational interpretation of the set of CPCP channels? For instance, it is known that $J(\Phi)$ is separable if and only if $\Phi$ is a measure-and-prepare channel. Is there an analogous statement that can be made if $J(\Phi)$ is instead completely positive?
    
    \item We showed that the states $(I \pm Y)/2$ are maximally non-non-negative in the sense that they can be mapped via CP-preserving channels to arbitrary qubit states. Are there states that are similarly maximally non-non-negative in higher dimensions, and if so, what are they?
    
    \item We mentioned that the trace distance of coherence $N_{\CP}^{\textup{tr}}$ is monotonic, but probably not strongly monotonic. Can an explicit example be constructed to show that it indeed is not strongly monotonic?
    
    \item There are numerous other natural measures of non-negativity that could be defined and explored. For example, we could define the \emph{relative entropy of non-negativity} by
    \begin{equation}
        N_{\CP}^{\textup{r.e.}}(\rho) \defeq \min_{\sigma \in \CP} \big\{ S(\rho \| \sigma) : \mathrm{range}(\rho) \subseteq \mathrm{range}(\sigma)\big\},
    \end{equation} 
    where $S(\rho \| \sigma) = \tr\big(\rho\log(\rho)\big) - \tr\big(\rho\log(\sigma)\big)$ is the relative entropy of $\rho$ with respect to $\sigma$, and explore what properties and interpretations it has.\bigskip
\end{itemize} 
  
\noindent \textbf{Acknowledgements.} N.J.\ was supported by NSERC Discovery Grant number RGPIN-2016-04003. 

\bibliographystyle{ieeetr}
\bibliography{bib}

\begin{thebibliography}{10}

\bibitem{GW80}
L.~J. Gray and D.~G. Wilson, ``Nonnegative factorization of positive
  semidefinite nonnegative matrices,'' {\em Linear Algebra and Its
  Applications}, vol.~31, pp.~119--127, 1980.

\bibitem{DG14}
P.~J.~C. Dickinson and L.~Gijben, ``On the computational complexity of
  membership problems for the completely positive cone and its dual,'' {\em
  Computational Optimization and Applications}, vol.~57, pp.~403--415, 2014.

\bibitem{Ber88}
A.~Berman, ``Complete positivity,'' {\em Linear Algebra and Its Applications},
  vol.~107, pp.~57--63, 1988.

\bibitem{BS03}
A.~Berman and N.~Shaked-Monderer, {\em Completely Positive Matrices}.
\newblock World Scientific, 2003.

\bibitem{Yu16}
N.~Yu, ``Separability of a mixture of dicke states,'' {\em Physical Review A},
  vol.~94, p.~060101(R), 2016.

\bibitem{TAQLS17}
J.~Tura, A.~Aloy, R.~Quesada, M.~Lewenstein, and A.~Sanpera, ``Separability of
  mixed {D}icke states: an {NP}-hard optimization problem,'' {\em Quantum},
  vol.~2, p.~45, 2018.

\bibitem{MATS21}
C.~Marconi, A.~Aloy, J.~Tura, and A.~Sanpera, ``Entangled symmetric states and
  copositive matrices,'' {\em Quantum}, vol.~5, p.~561, 2021.

\bibitem{JM19}
N.~Johnston and O.~MacLean, ``Pairwise completely positive matrices and
  conjugate local diagonal unitary invariant quantum states,'' {\em Electronic
  Journal of Linear Algebra}, vol.~35, pp.~156--180, 2019.

\bibitem{SN21}
S.~Singh and I.~Nechita, ``Diagonal unitary and orthogonal symmetries in
  quantum theory,'' {\em Quantum}, vol.~5, p.~519, 2021.

\bibitem{PSVW18}
A.~Prakash, J.~Sikora, A.~Varvitsiotis, and Z.~Wei, ``Completely positive
  semidefinite rank,'' {\em Mathematical Programming}, vol.~171, pp.~397--431,
  2018.

\bibitem{Ohz17}
M.~Ohzeki, ``Quantum {M}onte {C}arlo simulation of a particular class of
  non-stoquastic hamiltonians in quantum annealing,'' {\em Scientific Reports},
  vol.~7, p.~41186, 2017.

\bibitem{CG18}
E.~Chitambar and G.~Gour, ``Quantum resource theories,'' {\em Reviews of Modern
  Physics}, vol.~91, p.~025001, 2019.

\bibitem{Vid00}
G.~Vidal, ``Entanglement monotones,'' {\em Journal of Modern Optics}, vol.~47,
  pp.~355--376, 2000.

\bibitem{BCP14}
T.~Baumgratz, M.~Cramer, and M.~B. Plenio, ``Quantifying coherence,'' {\em
  Physical Review Letters}, vol.~113, p.~140401, 2014.

\bibitem{VMGE14}
V.~Veitch, S.~A.~H. Mousavian, D.~Gottesman, and J.~Emerson, ``The resource
  theory of stabilizer quantum computation,'' {\em New Journal of Physics},
  vol.~16, p.~013009, 2014.

\bibitem{HG18}
A.~Hickey and G.~Gour, ``Quantifying the imaginarity of quantum mechanics,''
  {\em Journal of Physics A: Mathematical and Theoretical}, vol.~51, p.~414009,
  2018.

\bibitem{WKR21}
K.-D. Wu, T.~V. Kondra, S.~Rana, C.~M. Scandolo, G.-Y. Xiang, C.-F. Li, G.-C.
  Guo, and A.~Streltsov, ``Resource theory of imaginarity: Quantification and
  state conversion,'' {\em Physical Review A}, vol.~103, p.~032401, 2021.

\bibitem{Wat18}
J.~Watrous, {\em The Theory of Quantum Information}.
\newblock Cambridge University Press, 2018.

\bibitem{NC00}
M.~A. Nielsen and I.~L. Chuang, {\em Quantum computation and quantum
  information}.
\newblock Cambridge University Press, 2000.

\bibitem{Cho75}
M.-D. Choi, ``Completely positive linear maps on complex matrices,'' {\em
  Linear Algebra and Its Applications}, vol.~10, pp.~285--290, 1975.

\bibitem{Reg18}
B.~Regula, ``Convex geometry of quantum resource quantification,'' {\em Journal
  of Physics A: Mathematical and Theoretical}, vol.~51, no.~4, p.~045303, 2018.

\bibitem{HN63}
M.~Hall and M.~Newman, ``Copositive and completely positive quadratic forms,''
  {\em Proceedings of the Cambridge Philosophical Society}, vol.~59, p.~32933,
  1963.

\bibitem{BB03}
F.~Barioli and A.~Berman, ``The maximal {CP}-rank of rank k completely positive
  matrices,'' {\em Linear Algebra and Its Applications}, vol.~363, pp.~17--33,
  2003.

\bibitem{SBJS13}
N.~Shaked-Monderer, I.~M. Bomze, F.~Jarre, and W.~Schachinger, ``On the
  {CP}-rank and minimal {CP} factorizations of a completely positive matrix,''
  {\em SIAM Journal on Matrix Analysis and Applications}, vol.~34, no.~2,
  pp.~355--368, 2013.

\bibitem{TCF19}
G.~Torlai, J.~Carrasquilla, M.~T. Fishman, R.~G. Melko, and M.~P.~A. Fisher,
  ``Wavefunction positivization via automatic differentiation,'' {\em Physical
  Review Research}, vol.~2, p.~032060(R), 2020.

\bibitem{BV04}
S.~Boyd and L.~Vandenberghe, {\em Convex optimization}.
\newblock Cambridge University Press, 2004.

\bibitem{Par00}
P.~A. Parrilo, {\em Structured Semidefinite Programs and Semialgebraic Geometry
  Methods in Robustness and Optimization}.
\newblock PhD thesis, California Institute of Technology, 2000.

\bibitem{LR21}
J.~Levick and M.~Rahaman, ``Positively factorizable maps,'' {\em Linear Algebra
  and its Applications}, vol.~631, pp.~282--307, 2021.

\bibitem{Pau03}
V.~I. Paulsen, {\em Completely bounded maps and operator algebras}.
\newblock Cambridge University Press, 2003.

\bibitem{YMG16}
B.~Yadin, J.~Ma, D.~Girolami, M.~Gu, and V.~Vedral, ``Quantum processes which
  do not use coherence,'' {\em Physical Review X}, vol.~6, p.~041028, Nov 2016.

\bibitem{CG16}
E.~Chitambar and G.~Gour, ``Comparison of incoherent operations and measures of
  coherence,'' {\em Physical Review A}, vol.~94, p.~052336, 2016.

\bibitem{SuppCode}
N.~Johnston, ``{MATLAB} code for computing norms and measures of
  non-negativity.''
  \url{http://www.njohnston.ca/publications/res-theory-non-neg/}. Also
  available in the ``source'' files for the arXiv version of this paper, 2021.

\bibitem{VT99}
G.~Vidal and R.~Tarrach, ``Robustness of entanglement,'' {\em Physical Review
  A}, vol.~59, pp.~141--155, 1999.

\bibitem{NBCPJA16}
C.~Napoli, T.~R. Bromley, M.~Cianciaruso, M.~Piani, N.~Johnston, and G.~Adesso,
  ``Robustness of coherence: An operational and observable measure of quantum
  coherence,'' {\em Physical Review Letters}, vol.~116, p.~150502, 2016.

\bibitem{Kay87}
M.~Kaykobad, ``On nonnegative factorization matrices,'' {\em Linear Algebra and
  Its Applications}, vol.~96, pp.~27--33, 1987.

\bibitem{CVX}
M.~Grant and S.~Boyd, ``{CVX}: {MATLAB} software for disciplined convex
  programming, version 2.0 beta.'' http://cvxr.com/cvx, Sept. 2012.

\bibitem{EAP03}
J.~Eisert, K.~Audenaert, and M.~B. Plenio, ``Remarks on entanglement measures
  and non-local state distinguishability,'' {\em Journal of Physics A:
  Mathematical and General}, vol.~36, p.~5605, 2003.

\bibitem{RPL16}
S.~Rana, P.~Parashar, and M.~Lewenstein, ``Trace-distance measure of
  coherence,'' {\em Physical Review A}, vol.~93, p.~012110, 2016.

\bibitem{YZXT16}
X.-D. Yu, D.-J. Zhang, G.~F. Xu, and D.~M. Tong, ``Alternative framework for
  quantifying coherence,'' {\em Physical Review A}, vol.~94, p.~060302(R),
  2016.

\bibitem{PCB16}
M.~Piani, M.~Cianciaruso, T.~R. Bromley, C.~Napoli, N.~Johnston, and G.~Adesso,
  ``Robustness of asymmetry and coherence of quantum states,'' {\em Physical
  Review A}, vol.~93, p.~042107, 2016.

\bibitem{CGJ16}
J.~Chen, S.~Grogan, N.~Johnston, C.-K. Li, and S.~Plosker, ``Quantifying the
  coherence of pure quantum states,'' {\em Physical Review A}, vol.~94,
  p.~042313, 2016.

\bibitem{JLP18}
N.~Johnston, C.-K. Li, and S.~Plosker, ``The modified trace distance of
  coherence is constant on most pure states,'' {\em Journal of Physics A:
  Mathematical and Theoretical}, vol.~51, p.~414010, 2018.

\bibitem{CF18}
B.~Chen and S.-M. Fei, ``Notes on modified trace distance measure of
  coherence,'' {\em Quantum Information Processing}, vol.~17, p.~107, 2018.

\end{thebibliography}
\end{document}